\newcommand{\keywords}[1]{\par\addvspace\baselineskip
\noindent\keywordname\enspace\ignorespaces#1}
\begin{document}

\title{Every Formula-Based Logic Program Has a Least Infinite-Valued Model\footnote{This paper appears in the Proceedings of the 19th International Conference on Applications of Declarative Programming and Knowledge Management (INAP 2011).}}
\titlerunning{Every Formula-Based Logic Program Has a Least Infinite-Valued Model}
\author{Rainer L\"udecke}
\authorrunning{Rainer L\"udecke}
\institute{Universit\"at T\"ubingen, Wilhelm-Schickard-Institut,\\
Sand 13, 72076 T\"ubingen, Germany\\
\email{luedecke@informatik.uni-tuebingen.de}}
\toctitle{The Existence of Least Models for Formula-Based Logic Programs}
\tocauthor{Rainer L\"udecke}
\maketitle
\begin{abstract}
Every definite logic program has as its meaning a least Herbrand model with respect to the program-independent ordering $\subseteq$. In the case of normal logic programs there do not exist least
models in general. However, according to a recent approach by Rondogiannis and Wadge, who consider infinite-valued models,
every normal logic program does have a least model with respect to a
program-independent ordering. We show that this approach can be extended
to formula-based logic programs (i.e., finite sets of rules of the form $A\leftarrow \phi$ where $A$ is an atom and $\phi$ an arbitrary first-order formula).
We construct for a given program $P$ an interpretation $M_P$ and show that
it is the least of all models of $P$.
\keywords{Logic programming, semantics of programs, negation-as-failure, infinite-valued logics, set theory}
\end{abstract}

\section{Introduction}
It is well-known that every definite logic program $P$ has a Herbrand model and the intersection of all its Herbrand models is also a model of $P$. We call it the least Herbrand model or the canonical model of $P$ and constitute that it is the intended meaning of the program. If we consider a normal logic program $P$ it is more complicated to state the intended meaning of the program because the intersection of all its models is not necessarily a model. There are many approaches to overcome that problem. The existing approaches are not purely model-theoretic (i.e., there are normal logic programs that have the same models but different intended meanings). However, there is a recent purely model-theoretic approach of P. Rondogiannis and W. Wadge \cite{RondogiannisWadge}. They prove that every normal logic program has a least infinite-valued model. Their work is based on an infinite set of truth values, ordered as follows:$$\mathcal{F}_0<\mathcal{F}_1<...<\mathcal{F}_\alpha<...<0<...<\mathcal{T}_\alpha<...<\mathcal{T}_1<\mathcal{T}_0$$Intuitively, $\mathcal{F}_0$ and $\mathcal{T}_0$ are the classical truth values \textit{False} and \textit{True}, $0$ is the truth value \textit{Undefined} and $\alpha$ is an arbitrary countable ordinal. The considered ordering of the interpretations is a program-independent ordering on the infinite-valued interpretations and generalizes the classical ordering on the Herbrand interpretations. The intended meaning of a normal logic program is, as in the classical case, stated as the unique minimal infinite-valued model of $P$. Furthermore, they show that the 3-valued interpretation that results from the least infinite-valued model of $P$ by  collapsing all true values to True and all false values to False coincides with the well-founded model of $P$ introduced in \cite{Przymusinski}.
\\Inspired by \cite{Schroeder-Heister} we consider in this paper formula-based logic programs. A formula-based logic program is a finite set of rules of the form $A\leftarrow \phi$, where $A$ is an atomic formula and $\phi$ is an arbitrary first-order formula. We show that the construction methods to obtain the least infinite-valued model of a normal logic program $P$ given in  \cite{RondogiannisWadge} can be adapted to  formula-based logic programs. The initial step to  carry out this  adaption is the proof of two extension theorems.  Informally speaking, these theorems state  that a complex  formula shows the same behavior as  an atomic formula.  While Rondogiannis and Wadge \cite{RondogiannisWadge} make use of the fact that the bodies of normal program rules are conjunctions of negative or positive atoms, we instead make use of one of the extension theorems. The second step to achieve the adaption is the set-theoretical fact that the least uncountable cardinal $\aleph_1$ is  regular (i.e., the limit of a countable sequence of countable ordinals is in $\aleph_1$). Contrary to the bodies of normal program rules, the bodies of formula-based program rules can refer a ground atom to a countably infinite set of ground atoms. This is the reason why we must use in our approach $\aleph_1$ many iteration steps in the construction of the least model of a given program $P$ in conjunction with the regularity of $\aleph_1$. In \cite{RondogiannisWadge} $\omega$ many iteration steps in conjunction with the fact that the limit of a finite sequence of natural numbers is once again a natural number is sufficient to construct the least model. Towards the end of the paper, we use again the regularity of $\aleph_1$ to show that there is a countable ordinal $\delta_{\textup{max}}$ with the property that every least model of a formula-based logic-program refers only to truth values of the form $\mathcal{T}_\alpha$ or $\mathcal{F}_\alpha$ or $0$, where $\alpha<\delta_{\textup{max}}$. This implies that we only need a very small fragment of the truth values if we consider the meaning of a formula-based logic program. Finally, we  show that the 3-valued interpretation that results from the least infinite-valued model of a given formula-based logic program by  collapsing all true values to True and all false values to False, is  a model of $P$ in the sense of \cite{Przymusinski}. But compared to the case of normal logic programs, the collapsed least infinite-valued model of a formula-based logic program is not a minimal 3-valued model of $P$ in general. However, there is a simple restriction for the class of formula-based programs such that the collapsed model is minimal in general.\\At this point we would like to mention that we did not develop the theory presented in this paper with respect to applied logic. We have  a predominantly theoretical interest in extending the notion of \textit{inductive definition} to a wider class of rules. \\We make heavy use of ordinal numbers in this paper. Therefore, we included an appendix with a short introduction to ordinal numbers for those readers who are not familiar with this part of set theory. Moreover, one can find the omitted proofs and a detailed discussion of an example within the appendix. It is downloadable at:\\ \url{http://www-ls.informatik.uni-tuebingen.de/luedecke/luedecke.html}         
  
\section{Infinite-Valued Models}We are interested in logic programs based on a first-order Language $\mathcal{L}$ with finitely many predicate symbols, function symbols, and constants. 
\begin{definition}\label{Sprache Def}\textup{The alphabet of $\mathcal{L}$ consists of the following symbols, where the numbers $n$, $m$, $l$, $s_1$,...,$s_n$, $r_1$,...,$r_m$ are natural numbers such that $n,l,r_i\geq1$ and $m,s_i\geq0$ hold:\begin{enumerate}
\item Predicate symbols: $P_{1},...,P_n$ with assigned arity $s_{1},...,s_{n}$\item Function symbols: $f_1,...,f_m$ with assigned arity $r_{1},...,r_{m}$\item Constants (abbr.: $\textup{Con}$): $c_{1},...,c_{l}$\item Variables (abbr.: $\textup{Var}$): $x_{k}\text{ provided that }k\in\mathbb{N}$\item Connectives:
$\wedge,\vee,\neg,\forall,\exists,\perp,\top$\item Punctuation symbols: '(', ')' and ','\end{enumerate}}\end{definition}
The natural numbers $n,m,l,s_1,...,s_n,r_1,...,r_m\in\mathbb{N}$ are fixed and the language $\mathcal{L}$ only depends on these numbers. If we consider different languages of this type, we will write $\mathcal{L}_{n,m,l,(s_i),(r_i)}$ instead of $\mathcal{L}$ to prevent confusion. The following definitions depend on $\mathcal{L}$. However, to improve readability, we will not mention this again. 
\begin{definition}\textup{
The set of \textit{terms} $\textup{Term}$ is the smallest one satisfying: \begin{enumerate}
\item Constants and variables are in Term.
\item \text{If $t_1,...,t_{r_i}\in \textup{Term}$, then $f_i(t_1,...,t_{r_i})\in \textup{Term}$}.
\end{enumerate}
}\end{definition}
\begin{definition}\textup{The \textit{Herbrand universe} $H_U$ is the set of ground terms (i.e., terms that contain no variables).}
\end{definition}
\begin{definition}\textup{The set of \textit{formulas} $\textup{Form}$ is the smallest one satisfying:
\begin{enumerate}
\item \text{$\perp$ and $\top$ are elements of Form.}
\item \text{If }$t_1,...,t_{s_k}\in \textup{Term}\text{, then }P_k (t_1,...,t_{s_k})\in \textup{Form}.$
\item \text{If }$\phi,\psi\in \textup{Form}$\text{ and }$v\in \textup{Var}$\text{, then }$\neg(\phi),\,(\phi\wedge\psi),\,(\phi\vee\psi),\,\forall v(\phi),\,\exists v(\phi)\in \textup{Form.}$
\end{enumerate}}\end{definition}
 An \textit{atom} is a formula only constructed by means of 1.) or 2.) and a \textit{ground atom} is an atom that contains no variables.
\begin{definition}\textup{The \textit{Herbrand base} $H_B$ is the set of all ground atoms, except $\perp$ and $\top$.}\end{definition}
\begin{definition}\textup{A \textit{(formula-based)} \textit{rule} is of the form $A\leftarrow \phi$ where $\phi$ is an arbitrary formula and $A$ is an arbitrary atom provided that $A\neq\top$ and $A\neq\perp$.\\A \textit{(formula-based) logic program} is a finite set of (formula-based) rules. Notice that we write $A\leftarrow$ instead of $A\leftarrow\top$. Remember $A\leftarrow \phi$ is called a \textit{normal rule} (resp. \textit{definite rule}) if $\phi$ is a conjunction of literals (resp. positive literals). A finite set of normal (resp. \textup{definite}) rules is a \textit{normal (resp. definite) program}. 
}\end{definition}
\begin{definition}\textup{The set of truth values $W$ is given by $$W:=\{\left\langle 0,n \right\rangle;\,\,n\in\aleph_1\}\cup\left\{ 0 \right\}\cup\left\{ \left\langle 1,n\right\rangle;\,\,n\in\aleph_1\right\}.$$ Additionally, we define a  strict linear ordering $<$ on $W$ as follows:\begin{enumerate}\item $\left\langle  0,n\right\rangle <0$ and $0<\left\langle  1,n\right\rangle $ for all $n\in\aleph_1$
\item $\left\langle w,x \right\rangle<\left\langle y,z \right\rangle \quad\text{iff }\\\quad(w=0=y \text{ and }x\in z)\text{ or }(w=1=y\text{ and }z\in x)\text{ or }(w=0\text{ and }y=1)$
\end{enumerate} We define $\mathcal{F}_i:=\left\langle 0,i \right\rangle$ and $\mathcal{T}_i:=\left\langle 1,i \right\rangle$ for all $i\in\aleph_1$. $\mathcal{F}_i$ is a \textit{false} value and $\mathcal{T}_i$ is a \textit{true} value. The value $0$ is the \textit{undefined} value. The following summarizes the situation ($i\in\aleph_1$):\begin{center}$\mathcal{F}_0<\mathcal{F}_1<\mathcal{F}_2<...<\mathcal{F}_i<...<0<...<\mathcal{T}_i<...<\mathcal{T}_2<\mathcal{T}_1<\mathcal{T}_0$\end{center}
}\end{definition}
\begin{definition}\textup{The \textit{degree} (abbr.: $\textup{deg}$) of a truth value is given by $\textup{deg}(0):=\infty$, $\text{deg}(\mathcal{F}_\alpha):=\alpha$, and $\text{deg}(\mathcal{T}_\alpha):=\alpha$ for all $\alpha\in\aleph_1$.}\end{definition}
\begin{definition}\textup{An \textit{(infinite-valued Herbrand) }\emph{interpretation} $I$ is a function from the Herbrand base $H_B$ to the set of truth values $W$. A \textit{variable assignment} $h$ is a mapping from $\textup{Var}$ to $H_U$.
}\end{definition}
\begin{definition}\textup{Let $I$ be an interpretation and $w\in W$ be a truth value, then $I\Vert w$ is defined as the inverse image of $w$ under $I$ (i.e., $I\Vert w=\{A\in H_B;\;I(A)=w\}$).}\end{definition}
\begin{definition}\textup{Let $I$ and $J$  be interpretations and $\alpha\in\aleph_1$. We write $I=_\alpha J,$ if for all $\beta\leq \alpha$, $I\Vert \mathcal{F}_\beta=J\Vert \mathcal{F}_\beta$ and $I\Vert \mathcal{T}_\beta=J\Vert \mathcal{T}_\beta$.}\end{definition}
\begin{definition}\textup{Let $I$ and $J$  be interpretations and $\alpha\in\aleph_1$. We write  $I\sqsubseteq_\alpha J$, if for all $\beta< \alpha$, $I=_\beta J$ and furthermore $J\Vert \mathcal{F}_\alpha\subseteq I\Vert \mathcal{F}_\alpha\;\&\;I\Vert \mathcal{T}_\alpha\subseteq J\Vert \mathcal{T}_\alpha$. We write $I\sqsubset_\alpha J$, if $I\sqsubseteq_\alpha J$ and $I\neq_\alpha J$.}\end{definition}
Now we define a partial ordering $\sqsubseteq_\infty$ on the set of all  interpretations. It is easy to see that this ordering generalizes the classical partial ordering  $\subseteq$ on the set of 2-valued Herbrand interpretations. 
\begin{definition}\textup{Let $I$ and $J$  be interpretations. We write $I\sqsubset_\infty J$, if there exists an $\alpha\in\aleph_1$  such that $I\sqsubset_\alpha J$. We write $I\sqsubseteq_\infty J$, if $I\sqsubset_\infty J$ or $I=J$.}\end{definition}
\begin{remark} To motivate these definitions let us briefly recall the classical 2-valued situation. Therefore let us pick two (2-valued) Herbrand interpretations $I, J\subseteq H_B$. Considering these, it becomes apparent that $I\subseteq J$ holds if and only if the set of ground atoms that are false w.r.t. $J$ is a subset of the set of ground atoms that are false w.r.t. $I$ and the set of ground atoms that are true w.r.t. $I$ is a subset of the set of ground
atoms that are true w.r.t. $J$.\end{remark}
\begin{definition}\textup{Let $h$ be a variable assignment. The \textit{semantics of terms} is given by (with respect to $h$):
\begin{enumerate}
\item $\llbracket c\rrbracket_h=c$ if $c$ is a constant. 
\item $\llbracket v\rrbracket_h=h(v)$ if $v$ is a variable.
\item $\llbracket f_i(t_1,...,t_{r_i})\rrbracket_h=f_i(\llbracket t_1\rrbracket_h,...,\llbracket t_{r_i}\rrbracket_h)$ if $1\leq i\leq m$ and $t_1,...,t_{r_i}\in \textup{Term}$.
\end{enumerate}}\end{definition}
Before we start to talk about the semantics of formulas, we have to show that every subset of $W$ has a \textit{least upper bound} (abbr: $\textup{sup}$) and a \textit{greatest lower bound} (abbr: $\textup{inf}$). The proof of the following lemma is left to the reader. The proof is using the fact that every nonempty subset of $\aleph_1$ has a least element.  
\begin{lemma}\label{sup inf existenz annahme} For every subset $M\subseteq W$ the least upper bound $\textup{sup}\,M$ and the greatest lower bound $\textup{inf}\,M$ exist in $W$. Moreover, $\textup{sup}\,M\in\{\mathcal{T}_\alpha;\,\alpha\in\aleph_1\}$ implies that $\textup{sup}\,M\in M$ and  on the other hand $\textup{inf}\,M\in\{\mathcal{F}_\alpha;\,\alpha\in\aleph_1\}$ implies that $\textup{inf}\,M\in M$.
\end{lemma}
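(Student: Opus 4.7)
The plan is to exploit the three-layered structure of $W$: on the bottom sit the false values, order-isomorphic to $\aleph_1$; above them the singleton $\{0\}$; on top the true values, anti-isomorphic to $\aleph_1$. Since every nonempty subset of $\aleph_1$ has a least element (well-foundedness of the ordinals), each nonempty $M\subseteq W$ falls into a small number of configurations in which $\sup M$ and $\inf M$ can be read off directly. For the empty set one notes that $\sup\emptyset=\mathcal{F}_0$ and $\inf\emptyset=\mathcal{T}_0$, since these are respectively the minimum and maximum of $W$.

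For nonempty $M$ I would first handle $\sup M$. Put $M_T:=\{\alpha\in\aleph_1:\mathcal{T}_\alpha\in M\}$. If $M_T\neq\emptyset$, let $\alpha^{*}:=\min M_T$, which exists by well-foundedness. Because true values with smaller index are larger, and every non-true element of $W$ sits strictly below every true value, $\mathcal{T}_{\alpha^{*}}$ is the maximum of $M$; hence $\sup M=\mathcal{T}_{\alpha^{*}}\in M$, which already delivers the first attainment clause of the lemma. If $M_T=\emptyset$ but $0\in M$, then $\sup M=0$, since $0$ dominates every false value.

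The interesting case is $M\subseteq\{\mathcal{F}_\alpha:\alpha\in\aleph_1\}$. Set $M_F:=\{\alpha\in\aleph_1:\mathcal{F}_\alpha\in M\}$ and let $\gamma$ be the ordinal supremum of $M_F$. If $\gamma\in\aleph_1$, then $\mathcal{F}_\gamma\in W$ and a short verification (using that $\mathcal{F}_\alpha\leq\mathcal{F}_\beta$ iff $\alpha\leq\beta$) shows $\sup M=\mathcal{F}_\gamma$. The main obstacle is the remaining possibility that $M_F$ is cofinal in $\aleph_1$, so that $\gamma=\aleph_1\notin\aleph_1$: then no false value is an upper bound of $M$, but $0$ is, and since everything strictly below $0$ in $W$ is a false value, $0$ is the least upper bound. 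Note that in this case $\sup M=0\notin\{\mathcal{T}_\alpha:\alpha\in\aleph_1\}$, so the ``moreover'' clause of the lemma for the supremum remains vacuously satisfied.

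The argument for $\inf M$ is completely symmetric: swap the roles of $\mathcal{T}$ and $\mathcal{F}$ and reverse all inequalities throughout. The attainment statement for $\inf M\in\{\mathcal{F}_\alpha:\alpha\in\aleph_1\}$ then falls out of the symmetric counterpart of the first nontrivial case: whenever $M$ meets the false values, $\inf M=\mathcal{F}_{\min M_F}\in M$.
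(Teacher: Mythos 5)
Your proof is correct and follows exactly the route the paper indicates: the paper leaves this lemma to the reader with the hint that it rests on every nonempty subset of $\aleph_1$ having a least element, and your case analysis (maximum attained via $\min M_T$ resp.\ $\min M_F$, ordinal supremum for the bounded false/true segment, and $0$ as the least upper bound when the indices are cofinal in $\aleph_1$) is a complete and accurate elaboration of that hint, including the attainment clauses. No gaps.
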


\begin{definition}\label{wfunktion}\textup{Let $I$ be an interpretation and $h$   be a variable assignment. The \textit{semantics of formulas} is given by (with respect to $I$ and $h$):\begin{enumerate}\item If $t_1,...,t_{s_k}\in  \textup{Term}$, then $\llbracket P_k (t_1,...,t_{s_k})\rrbracket^I_h=I\left(P_{k}(\llbracket t_1\rrbracket_{h},...,\llbracket t_{s_k}\rrbracket_h)\right)$. Additionally, the semantics of $\top$ and $\bot$ is given by $\llbracket\top \rrbracket^I_h=\mathcal{T}_0$ and $\llbracket \bot\rrbracket^I_h=\mathcal{F}_{0}$.
\item If $\phi,\psi\in \textup{Form}$ and $v$ an arbitrary variable, then $\llbracket \phi\wedge\psi\rrbracket^I_h=\textup{min}\{\llbracket \phi\rrbracket^I_h,\llbracket \psi\rrbracket^I_h\}$, $\llbracket\phi\vee\psi\rrbracket^I_h=\textup{max}\{\;\llbracket \phi\rrbracket^I_h\;,\;\llbracket \psi\rrbracket^I_h\;\}$, $\llbracket\exists v(\phi)\rrbracket^I_h=\textup{sup}\{\llbracket\phi\rrbracket^I_{h[v\mapsto u]};\;u\in H_U\}$, $\llbracket\forall v(\phi)\rrbracket^I_h=\textup{inf}\{\llbracket\phi\rrbracket^I_{h[v\mapsto u]};\;u\in H_U\}$ and $\llbracket \neg(\phi)\rrbracket^I_h=\begin{cases} 
\mathcal{T}_{\alpha+1}, &\text{if }\llbracket\phi\rrbracket^I_h=\mathcal{F}_{\alpha}\\
\mathcal{F}_{\alpha+1}, &\text{if }\llbracket\phi\rrbracket^I_h=\mathcal{T}_{\alpha}\\
0, &\text{otherwise} \end{cases}.$ 
\end{enumerate}
}\end{definition}

\begin{definition}\label{defmodel}\textup{Let $A\leftarrow \phi$ be a rule, $P$ a program and $I$ an interpretation. Then $I$ \textit{satisfies} $A\leftarrow \phi$ if for all variable assignment $h$ the property $\llbracket A\rrbracket^I_h\geq\llbracket \phi\rrbracket_{h}^I$ holds. Furthermore, $I$ is a \textit{model} of $P$ if $I$ satisfies all rules of $P$.  
}\end{definition}
\begin{definition}\textup{Let $A\leftarrow \phi$  be a rule and $\sigma$ be a variable substitution (i.e., a function from $\textup{Var}$ to $\textup{Term}$  with finite support). Then, $A\sigma\leftarrow \phi\sigma$ is a \textit{ground instance} of the rule $A\leftarrow \phi$ if $A\sigma\in H_B$ and all variables in $\phi\sigma$ are in the scope of a quantifier. It is easy to see that that $\llbracket A\sigma\rrbracket^I_h$ and $\llbracket \phi\sigma\rrbracket^I_{h}$ (with respect to an interpretation $I$ and a variable assignment $h$) depend only on $I$. That is why we write also $\llbracket A\sigma\rrbracket^I$ and $\llbracket \phi\sigma\rrbracket^I$. We denote the \textit{set of all ground instances} of a program $P$ with $P_G$.
}\end{definition}
\begin{example} Consider the formula-based program $P$ given by the set of rules $\{P(c)\leftarrow,\; R(x)\leftarrow\neg P(x),\;P(Sx)\leftarrow \neg R(x),\;Q\leftarrow\forall x\left(P(x)\right)\}$. Then it is easy to prove that the Herbrand interpretation $I=\{P(S^nc)\mapsto \mathcal{T}_{2n};\; n\in\mathbb{N}\}\cup\{R(S^nc)\mapsto \mathcal{F}_{2n+1};\;n\in\mathbb{N}\}\cup\{Q\mapsto \mathcal{T}_\omega\}$ is a model of $P$. Moreover, using the results of this paper one can show that it is also the least Herbrand model of $P$. 
\end{example}
\begin{remark} Before we proceed we want to give a short informal but intuitive description of the semantics given above. Let us consider two rabbits named Bugs Bunny and Roger Rabbit. We know about them, that Bugs Bunny is a grey rabbit and if Roger Rabbit is not a grey rabbit, then  he is a white one. This information can be understood as a normal logic program: \begin{center}\textit{grey(Bugs Bunny)}$\;\Leftarrow$\\\textit{white(Roger Rabbit)}$\;\Leftarrow\;$\textit{not grey(Roger Rabbit)}\end{center} There is no doubt that \textit{Bugs Bunny is grey} is true because it is a fact. There is also no doubt that every try to prove that \textit{Roger Rabbit is grey} will fail. Hence, using the negation-as-failure rule, we can infer that \textit{Roger Rabbit is white} is also true. But everybody would agree that there is a difference of quality between the two statements because negation-as-failure is not a sound inference rule. The  approach of \cite{RondogiannisWadge}  suggests that the ground atom \textit{grey(Bugs Bunny)}  receives the best possible truth value named $\mathcal{T}_0$ because it is a fact of the program. The atom  \textit{grey(Roger Rabbit)}  receives the worst possible truth value named $\mathcal{F}_0$ because of the negation-as-failure approach. Hence, using the above semantics for negation, \textit{white(Roger Rabbit)}  receives only the second best truth value $\mathcal{T}_1$. 
\end{remark}
\section{The Immediate Consequence Operator} 
\begin{definition}\label{def immediate consequence}\textup{Let $P$ be a program, then the \textit{immediate consequence operator} $T_P$  for  the program $P$ is a mapping from and into $\{I;\;I\text{ is an interpretation}\}$,
where $T_P(I)$ maps an $A\in H_B$ to $T_P(I)(A):=\textup{sup}\{\llbracket \phi\rrbracket^I;\text{ $A\leftarrow \phi\in P_G$}\}.$ (Notice that $P_G$ can be infinite and hence we cannot use max instead of sup.)
}\end{definition}
\begin{definition}\textup{ Let $\alpha$ be an arbitrary countable ordinal. A function $T$ from and into the set of interpretations is called \textit{$\alpha$-monotonic} iff for all interpretations $I$ and $J$ the property $I\sqsubseteq_{\alpha}J\Rightarrow T(I)\sqsubseteq_{\alpha}T(J)$ holds. }\end{definition}
We will show that $T_P$ is $\alpha$-monotonic. Before we will  give the proof of this property, we have to prove the first extension theorem.
\begin{theorem}[Extension Theorem I]\label{Fortsetzungssatz1} Let $\alpha$ be an arbitrary  countable ordinal and $I$, $J$  two interpretations provided that $I\sqsubseteq_{\alpha}J$.  The following properties hold for every formula $\phi$:\begin{enumerate}\item If $\mathcal{F}_0\leq w\,\leq \mathcal{F}_{\alpha}$ and $h$ an assignment, then $\llbracket \phi\rrbracket^J_h=w\:\Rightarrow\:\llbracket \phi \rrbracket^I_h=w$.\item If $\mathcal{T}_{\alpha}\;\leq w\leq \mathcal{T}_{0}$ and $h$ an assignment, then $\,\llbracket\phi \rrbracket^I_h=w\;\Rightarrow\;\llbracket \phi \rrbracket^J_h=w$.\item If $\textup{deg}(w)\,\,<\alpha\;\,$ and $h$ an assignment, then $\,\llbracket\phi \rrbracket^I_h=w\;\Leftrightarrow\;\llbracket \phi \rrbracket^J_h=w$.\end{enumerate}\end{theorem}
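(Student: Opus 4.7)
The plan is to prove all three properties simultaneously by structural induction on $\phi$. The three claims must be treated jointly, because the bi-implication (3) is the main tool used when handling the compound cases of (1) and (2).

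For the base cases, the constants give $\llbracket\top\rrbracket^I_h=\mathcal{T}_0$ and $\llbracket\bot\rrbracket^I_h=\mathcal{F}_0$ independently of $I$, so (1)--(3) are trivial. For an atom $P_k(t_1,\ldots,t_{s_k})$ the value equals $I(A)$ for a specific ground atom $A\in H_B$; thus (3) follows directly from $I=_\beta J$ for every $\beta<\alpha$ (part of the definition of $I\sqsubseteq_\alpha J$). For (1) the case $w=\mathcal{F}_\gamma$ with $\gamma<\alpha$ reduces to (3), and the remaining case $w=\mathcal{F}_\alpha$ follows from $J\Vert\mathcal{F}_\alpha\subseteq I\Vert\mathcal{F}_\alpha$; claim (2) is handled symmetrically using $I\Vert\mathcal{T}_\alpha\subseteq J\Vert\mathcal{T}_\alpha$.

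The inductive step for $\neg\phi$ is clean, because negation outputs only successor-indexed $\mathcal{T}$/$\mathcal{F}$ values or the value $0$ (whose degree $\infty$ is never relevant to any of the three claims). Hence whenever $\llbracket\neg\phi\rrbracket$ takes a value of degree at most $\alpha$, the subformula $\phi$ must take a value of strictly smaller degree, and IH (3) applied to $\phi$ transfers it immediately. For $\phi\wedge\psi$, $\phi\vee\psi$, $\forall v(\phi)$ and $\exists v(\phi)$, the argument first identifies the subformula (or, for the quantifiers, the substitution instance $h[v\mapsto u_0]$) that realizes the relevant min/max or sup/inf. For the quantifier cases Lemma~\ref{sup inf existenz annahme} is essential, since it guarantees that a sup landing in a $\mathcal{T}$-value, or an inf landing in an $\mathcal{F}$-value, is actually attained, so IH (1)/(2)/(3) can be applied to a concrete witness. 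One then applies the appropriate induction hypothesis to the witness and, for the remaining subformulas or instances, verifies that their values on the opposite side of $\sqsubseteq_\alpha$ are still bounded on the correct side of $\mathcal{F}_\beta$ or $\mathcal{T}_\beta$.

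The main obstacle is exactly this last bookkeeping step. For example, to establish (1) for $\phi\wedge\psi$ when $\llbracket\phi\wedge\psi\rrbracket^J_h=\mathcal{F}_\beta$ with $\beta\leq\alpha$ and the min is realized by $\phi$, one obtains $\llbracket\phi\rrbracket^I_h=\mathcal{F}_\beta$ from IH (1), but one still needs $\llbracket\psi\rrbracket^I_h\geq\mathcal{F}_\beta$. This is handled by a small sub-lemma: if $\llbracket\psi\rrbracket^J_h\geq\mathcal{F}_\beta$ and $\beta\leq\alpha$, then $\llbracket\psi\rrbracket^I_h\geq\mathcal{F}_\beta$ as well. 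Its proof is by contrapositive using IH (3): if $\llbracket\psi\rrbracket^I_h=\mathcal{F}_{\beta'}$ with $\beta'<\beta\leq\alpha$, then its degree is $<\alpha$ and IH (3) forces $\llbracket\psi\rrbracket^J_h=\mathcal{F}_{\beta'}$, contradicting the hypothesis. Every compound case in (1) and (2), and the quantifier cases in particular, rests on this kind of interplay, which is precisely what necessitates the joint induction over (1), (2) and (3).
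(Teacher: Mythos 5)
Your proposal is correct and follows essentially the same route as the paper: a joint structural induction on $\phi$ over all three claims, using Lemma~\ref{sup inf existenz annahme} to obtain a witness attaining a sup in a true value or an inf in a false value, and the contrapositive application of induction hypothesis (3) to bound the remaining subformulas or instances. The only difference is organizational: the paper packages your ``bookkeeping sub-lemma'' and witness argument into a single technical lemma about $\inf$ and $\sup$ over an arbitrary indexed family of subformulas, which then handles $\wedge$, $\vee$, $\forall$ and $\exists$ uniformly, whereas you describe the same reasoning case by case.
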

\begin{proof}We show these statements by induction on $\phi$. Let $I_{H}(X)$ be an abbreviation for 1. and 2. and 3., where $\phi$ is replaced by $X$ (induction hypothesis).\\
\textit{Case 1:} $\phi=\top$ or $\phi=\perp$. In this case 1., 2., and 3. are obviously true.
\\\textit{Case 2:} $\phi=P_k(t_1,...,t_{s_k})$. 1., 2., and 3. follow directly from $I\sqsubseteq_{\alpha}J$.
\\\textit{Case 3:} $\phi=\neg(A)$. We assume that $I_H(A)$. 
We show simultaneously that 1., 2. and 3. also hold. Therefore, we choose an assignment $h$ and a truth value $w$ such that $\mathcal{F}_0\leq w\leq \mathcal{F}_{\alpha}$ resp. $\mathcal{T}_{\alpha}\leq w\leq \mathcal{T}_{0}$ resp. $\textup{deg}(w)<\alpha$. Assume that $\llbracket \phi \rrbracket^J_h=w$ resp. $\llbracket \phi \rrbracket^I_h=w$ resp. $\llbracket \phi \rrbracket^{K_1}_h=w$ (where $K_1=I$ and $K_2=J$ or  $K_1=J$ and $K_2=I$). Using Definition \ref{wfunktion} we get that $\mathcal{T}_{\alpha-1}\leq \llbracket A\rrbracket_h^J\leq \mathcal{T}_{0}$ resp. $\mathcal{F}_{0}\leq \llbracket A\rrbracket_h^I\leq \mathcal{F}_{\alpha-1}$ resp. $\textup{deg}(\llbracket A\rrbracket_h^{K_{1}})<\alpha-1$. Then, from the third part of $I_H(A),$ $\llbracket A\rrbracket_h^J=\llbracket A\rrbracket_h^I$ resp. $\llbracket A\rrbracket_h^I=\llbracket A\rrbracket_h^J$ resp. $\llbracket A\rrbracket_h^{K_1}=\llbracket A\rrbracket_h^{K_2}$. Finally, using Definition \ref{wfunktion}, we get that $\llbracket \phi \rrbracket^I_h=w$ resp. $\llbracket \phi \rrbracket^J_h=w$ resp. $\llbracket \phi \rrbracket^{K_2}_h=w$.
\\\\Before we can go on with the next case, we must prove the following technical lemma. 
\begin{lemma}\label{Hilfssatz zu Fortsetzungssatz1}We use the same assumptions as in Theorem \ref{Fortsetzungssatz1}. Let $\mathcal{I}$  be a set of indices, $A_i$ ($i\in \mathcal{I}$) a formula provided that $I_H(A_i)$ and $h_i$ ($i\in \mathcal{I}$) an assignment. We define $\textup{inf}_K:=\textup{inf}\{\llbracket A_i\rrbracket_{h_i}^K;i\in \mathcal{I}\}$ and $\textup{sup}_K:=\textup{sup}\{\llbracket A_i\rrbracket_{h_i}^K;i\in \mathcal{I}\}$ (where $K=I,J$). Then the following holds:\begin{enumerate}\item $\textup{inf}_J=\mathcal{F}_\gamma\;\;\Rightarrow \;\textup{inf}_I=\mathcal{F}_\gamma$ (for all $\gamma\leq\alpha$)
\item $\textup{inf}_I=\mathcal{T}_\gamma\;\;\;\Rightarrow\;\textup{inf}_J=\mathcal{T}_\gamma$ (for all $\gamma\leq\alpha$)
\item $\textup{inf}_I=w\;\;\;\;\Leftrightarrow\,\,\textup{inf}_J=w$ (for all $w$ provided that $\textup{deg}(w)<\alpha$) \item $\textup{sup}_J=\mathcal{F}_\gamma\;\Rightarrow \,\textup{sup}_I=\mathcal{F}_\gamma$ (for all $\gamma\leq\alpha$)
\item $\textup{sup}_{I}=\mathcal{T}_\gamma\;\;\Rightarrow\,\textup{sup}_J=\mathcal{T}_\gamma$ (for all $\gamma\leq\alpha$)
\item $\textup{sup}_I=w\;\;\;\Leftrightarrow\,\textup{sup}_J=w$ (for all $w$ provided that $\textup{deg}(w)<\alpha$)
\end{enumerate}
\end{lemma} 
\begin{proof} \textit{1.:} Assume that $\textup{inf}_J=\mathcal{F}_\gamma$. Using Lemma \ref{sup inf existenz annahme} we get that there exists an $i_0$  such that $\llbracket A_{i_0}\rrbracket_{h_{i_{0}}}^J=\mathcal{F}_\gamma$. Then, from the first part of $I_H(A_{i_{0}}),$ $\llbracket A_{i_0}\rrbracket_{h_{i_{0}}}^I=\mathcal{F}_\gamma$. This implies that $\llbracket A_{i_0}\rrbracket_{h_{i_{0}}}^I\leq\llbracket A_{i}\rrbracket_{h_{i}}^I$ for all $i\in \mathcal{I}$. (Since otherwise we had that there exists a $j_0\in \mathcal{I}$ such that $\llbracket A_{j_0}\rrbracket_{h_{j_{0}}}^I<\mathcal{F}_\gamma$. Then, using the third part of  $I_H(A_{j_0})$, it would also be $\llbracket A_{j_0}\rrbracket_{h_{j_{0}}}^J<\mathcal{F}_\gamma$. But this contradicts our assumption $\textup{inf}_J=\mathcal{F}_\gamma$.) Finally, we get that $\textup{inf}_I=\mathcal{F}_\gamma$.
\\\textit{2.:} Assume now, that $\textup{inf}_I=\mathcal{T}_\gamma$. Then $\mathcal{T}_\gamma\leq\llbracket A_{i}\rrbracket_{h_{i}}^I$ for all $i\in \mathcal{I}$. Using part two of $I_H(A_i)$, we get that $\llbracket A_{i}\rrbracket_{h_{i}}^I=\llbracket A_{i}\rrbracket_{h_{i}}^J$ for all $i$. This implies $\textup{inf}_J=\mathcal{T}_\gamma$.
\\\textit{3.:} Due to 1. and 2., it only remains to show $(\textup{inf}_J=\mathcal{T}_\gamma\;\Rightarrow \;\textup{inf}_I=\mathcal{T}_\gamma)$ and $(\textup{inf}_I=\mathcal{F}_\gamma\;\Rightarrow \;\textup{inf}_J=\mathcal{F}_\gamma)$ for $\gamma<\alpha$. Assume that $\textup{inf}_J=\mathcal{T}_\gamma$ (where $\gamma<\alpha$). Then $\mathcal{T}_\gamma\leq\llbracket A_{i}\rrbracket_{h_{i}}^J$ for all $i\in \mathcal{I}$ and this implies, using the third part of $I_H(A_i)$, $\llbracket A_{i}\rrbracket_{h_{i}}^J=\llbracket A_{i}\rrbracket_{h_{i}}^I$ for all $i$. Finally, we get that $\textup{inf}_I=\mathcal{T}_\gamma$. 
\\ For the latter case assume that $\textup{inf}_I=\mathcal{F}_\gamma$  ($\gamma<\alpha$). Then there exists an $i_0$ such that $\llbracket A_{i_0}\rrbracket_{h_{i_{0}}}^I=\mathcal{F}_\gamma$ (Lemma \ref{sup inf existenz annahme}). Then, using the third part of $I_H(A_{i_{0}})$, we get that $\llbracket A_{i_0}\rrbracket_{h_{i_{0}}}^J=\mathcal{F}_\gamma$. This implies that $\llbracket A_{i_0}\rrbracket_{h_{i_{0}}}^J\leq\llbracket A_{i}\rrbracket_{h_{i}}^J$ for all $i\in \mathcal{I}$. (Since otherwise we had that there exists a $j_0\in \mathcal{I}$ such that $\llbracket A_{j_0}\rrbracket_{h_{j_{0}}}^I<\mathcal{F}_\gamma$, see proof of statement 1.) Finally, we get that $\textup{inf}_J=\mathcal{F}_\gamma$.\\
We will not give the\ proofs of 4., 5., and 6. here, because they are similar to 1., 2., and 3..\qed
\end{proof}
\textit{Case 4:} $\phi=A\wedge B$. Assume that $I_H(A)$ and $I_H(B)$. Let $h$ be an arbitrary assumption. We define $\mathcal{I}:=\{1,2\}$, $h_1:=h$, $h_2:=h$, $A_1:=A$ and $A_2:=B$. Then $I_H(A_i)$ for $i=1,2$, $\llbracket \phi \rrbracket^J_h=\textup{min}\{\llbracket A\rrbracket^J_h,\llbracket B\rrbracket^J_h\}=\textup{inf}_{J}$ and $\llbracket \phi \rrbracket^I_h=\textup{min}\{\llbracket A\rrbracket^I_h,\llbracket B\rrbracket^I_h\}=\textup{inf}_{I}$. Then, using 1., 2. and 3. of Lemma \ref{Hilfssatz zu Fortsetzungssatz1}, we get that 1., 2. and 3. of Theorem \ref{Fortsetzungssatz1} hold.
\\\textit{Case 5:} $\phi=A\vee B$. Replace $\textup{min}$ by $\textup{max}$ and $\textup{inf}$ by $\textup{sup}$ in the proof above and use 4., 5. and 6. of Lemma \ref{Hilfssatz zu Fortsetzungssatz1} instead of 1., 2. and 3..
\\\textit{Case 6:} $\phi=\forall v(A)$. Assume that $I_H(A)$ and let $h$ be an arbitrary assumption.
\\ We define $\mathcal{I}:=\{u;\;u\in H_U\}$, $h_u:=h[v\mapsto u]$ and $A_u:=A$ for all $u\in H_U$. Then $I_H(A_u)$ for all $u\in \mathcal{I}$, $\llbracket \phi \rrbracket^J_h=\textup{inf}\{\llbracket A\rrbracket^J_{h[v\mapsto u]};\;u\in H_U\}=\textup{inf}_{J}$, and $\llbracket \phi \rrbracket^I_h=\textup{inf}\{\llbracket A\rrbracket^I_{h[v\mapsto u]};\;u\in H_U\}=\textup{inf}_{I}$. Then, using 1., 2. and 3. of Lemma \ref{Hilfssatz zu Fortsetzungssatz1}, we get that 1., 2. and 3. of Theorem \ref{Fortsetzungssatz1} hold.
\\\textit{Case 7:} $\phi=\exists v(A)$. Replace $\textup{inf}$ by $\textup{sup}$ in the proof above and use 4., 5. and 6. of Lemma \ref{Hilfssatz zu Fortsetzungssatz1} instead of 1., 2. and 3..
\qed\end{proof}
\begin{lemma} \label{alphamonotonie} The immediate consequence operator $T_P$ of a given program $P$ is $\alpha$-monotonic for all countable ordinals $\alpha$.
\end{lemma}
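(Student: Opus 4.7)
My plan is to fix a countable ordinal $\alpha$ and interpretations $I, J$ with $I \sqsubseteq_\alpha J$, and then verify the three requirements in the definition of $T_P(I) \sqsubseteq_\alpha T_P(J)$: (a) $T_P(I) =_\beta T_P(J)$ for every $\beta < \alpha$; (b) $T_P(J)\Vert\mathcal{F}_\alpha \subseteq T_P(I)\Vert\mathcal{F}_\alpha$; and (c) $T_P(I)\Vert\mathcal{T}_\alpha \subseteq T_P(J)\Vert\mathcal{T}_\alpha$. The strategy is to reduce each of these three requirements, one ground atom $A \in H_B$ at a time, to the corresponding clause of Lemma~\ref{Hilfssatz zu Fortsetzungssatz1}.

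To set things up, I fix an arbitrary $A \in H_B$ and let $\mathcal{I}_A := \{\phi : A \leftarrow \phi \in P_G\}$. Since each $\phi \in \mathcal{I}_A$ is a ground instance, the value $\llbracket\phi\rrbracket^K$ is independent of the variable assignment, so I can simply fix one arbitrary assignment $h_0$ and set $h_\phi := h_0$ and $A_\phi := \phi$ for every $\phi \in \mathcal{I}_A$. By Theorem~\ref{Fortsetzungssatz1}, applied to $I \sqsubseteq_\alpha J$, the induction hypothesis $I_H(\phi)$ holds for every formula $\phi$ — in particular for every $\phi \in \mathcal{I}_A$. So Lemma~\ref{Hilfssatz zu Fortsetzungssatz1} is applicable with this indexing, and moreover by the definition of $T_P$ we have $T_P(I)(A) = \textup{sup}_I$ and $T_P(J)(A) = \textup{sup}_J$ in the notation of the lemma.

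With this translation in hand, the three requirements fall out immediately. For (b), if $T_P(J)(A) = \mathcal{F}_\alpha$, then $\textup{sup}_J = \mathcal{F}_\alpha$, and clause~4 of Lemma~\ref{Hilfssatz zu Fortsetzungssatz1} (with $\gamma = \alpha$) yields $\textup{sup}_I = \mathcal{F}_\alpha$, i.e.\ $T_P(I)(A) = \mathcal{F}_\alpha$. For (c), if $T_P(I)(A) = \mathcal{T}_\alpha$, clause~5 of the lemma (with $\gamma = \alpha$) yields $T_P(J)(A) = \mathcal{T}_\alpha$. For (a), given $\beta < \alpha$, I need to show $T_P(I)\Vert w = T_P(J)\Vert w$ for every truth value $w$ of degree $\leq \beta$; any such $w$ satisfies $\textup{deg}(w) < \alpha$, so clause~6 of the lemma gives the equivalence $T_P(I)(A) = w \Leftrightarrow T_P(J)(A) = w$ directly. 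Since $A \in H_B$ was arbitrary, this establishes $T_P(I) \sqsubseteq_\alpha T_P(J)$.

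I do not expect a serious obstacle: all the real work has already been carried out in Theorem~\ref{Fortsetzungssatz1} and Lemma~\ref{Hilfssatz zu Fortsetzungssatz1}, and the only thing to be careful about is that $P_G$ may be infinite (so one really needs \emph{suprema}, not maxima, which is precisely why clauses~4--6 of Lemma~\ref{Hilfssatz zu Fortsetzungssatz1} were proved in the general form that does not assume the sup is attained) and that the free variables inside quantifier scopes in a ground instance $\phi$ are harmless because $\llbracket \phi \rrbracket^K_h$ does not depend on $h$. Neither issue creates a genuine difficulty.
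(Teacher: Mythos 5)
Your proposal is correct. It differs from the paper's proof mainly in organization: the paper argues by transfinite induction on $\alpha$, obtaining $T_P(I)=_\beta T_P(J)$ for $\beta<\alpha$ from the induction hypothesis (since $I=_\beta J$ gives $I\sqsubseteq_\beta J$ and $J\sqsubseteq_\beta I$), and then handles the $\mathcal{T}_\alpha$ and $\mathcal{F}_\alpha$ inclusions by hand, picking a witnessing ground instance via Lemma~\ref{sup inf existenz annahme} and applying Theorem~\ref{Fortsetzungssatz1} to its body (with a small extra step ruling out $\mathcal{T}_\alpha<T_P(J)(A)$ using the already-established $=_\beta$ part). You instead observe that, once Theorem~\ref{Fortsetzungssatz1} is proved, its internal Lemma~\ref{Hilfssatz zu Fortsetzungssatz1} applies to \emph{any} family of formulas, in particular to the bodies of all ground instances with head $A$, and that $T_P(K)(A)$ is exactly the $\textup{sup}_K$ of that family; clauses 4--6 then deliver all three requirements of $\sqsubseteq_\alpha$ at once, with no induction on $\alpha$. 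The mathematical substance (extension theorem plus the sup manipulations) is identical -- your route just makes explicit that the sup-handling work was already done once and for all in the technical lemma, which is arguably cleaner, while the paper's version keeps Lemma~\ref{Hilfssatz zu Fortsetzungssatz1} private to the proof of Theorem~\ref{Fortsetzungssatz1} and therefore redoes a small amount of that reasoning. Your side remarks (that $P_G$ may be infinite, so suprema rather than maxima are needed, and that ground-instance bodies are assignment-independent) are exactly the right points to flag, and the degenerate case of an atom $A$ with no ground instance ($\textup{sup}\,\emptyset=\mathcal{F}_0$ on both sides) causes no trouble.
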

\begin{proof} The proof is by transfinite induction on $\alpha$. Assume the lemma holds for all $\beta<\alpha$ (induction hypothesis). We demonstrate that it also holds for $\alpha$.
Let $I,J$ be two interpretations such that $I\sqsubseteq_\alpha J$. Then, using the induction hypothesis, we get that \begin{equation}T_P(I)=_\beta T_P(J)\text{ for all }\beta<\alpha \label{lemmamono(S1)}.\end{equation} 
It remains to show that  $T_P(I)\parallel \mathcal{T}_\alpha\subseteq T_P(J)\parallel \mathcal{T}_\alpha$ and that  $T_P(J)\parallel \mathcal{F}_\alpha\subseteq T_P(I)\parallel \mathcal{F}_\alpha$. For the first statement assume that $T_P(I)(A)=\mathcal{T}_\alpha$ for some $A\in H_B.$  Then, using Lemma \ref{sup inf existenz annahme}, there exists a ground instance $A\leftarrow \phi$ of $P$ such that $\llbracket \phi\rrbracket ^I=\mathcal{T}_{\alpha}$. But then, by      Theorem \ref{Fortsetzungssatz1}, $\llbracket \phi\rrbracket ^J=\mathcal{T}_{\alpha}$. This implies $\mathcal{T}_\alpha\leq T_P(J)(A)$. But this implies $\mathcal{T}_\alpha= T_P(J)(A)$. (Since $\mathcal{T}_\alpha< T_P(J)(A)$, using  (\ref{lemmamono(S1)}), would imply $\mathcal{T}_\alpha< T_P(I)(A)$.) For the latter statement assume that $T_P(J)(A)=\mathcal{F}_\alpha$ for some $A\in H_B$. This implies that $\llbracket \phi\rrbracket^J\leq \mathcal{F}_\alpha$ for every ground instance $A\leftarrow \phi$ of $P$. But then, using again Theorem \ref{Fortsetzungssatz1}, we get that   $\llbracket \phi\rrbracket^I=\llbracket \phi\rrbracket^J$ for every ground instance $A\rightarrow \phi$ of $P$. Finally, this implies also $T_P(I)(A)=\mathcal{F}_\alpha$.\qed\end{proof} 
\begin{remark} The immediate consequence operator $T_P$ is not monotonic with respect to $\sqsubseteq_\infty$. Consider the program $P=\{A\leftarrow\neg A\}$ and the interpretations $I_1$ and $I_2$ given by $I_1:=\{A\mapsto \mathcal{F}_0\}$ and $I_2:=\{A\mapsto 0\}$. Obviously, $I_1\sqsubset_0 I_2$ and hence $I_1\sqsubseteq_\infty I_2$. Using Definition \ref{def immediate consequence}, we get that $T_P(I_1)=sup\{\llbracket\neg A\rrbracket^{I_1}\}=\mathcal{T}_1$ and $T_P(I_2)=sup\{\llbracket \neg A\rrbracket^{I_2}\}=0$. This implies $T_P(I_2)\sqsubset_1 T_P(I_1)$ (i.e., $T_P(I_1)\sqsubseteq_\infty T_P(I_2)$ does not hold). 
\end{remark}
\section{Construction of the Minimum Model}
In this section we show how to construct the interpretation $M_P$     of a given formula-based logic program $P$. We will give the proof that $M_P$ is a model of $P$ and that it is the least of all models of $P$ in the next section. In \cite{RondogiannisWadge} the authors give a clear informal description of the following construction: \\``As a first approximation to $M_P$, we start (...) iterating the $T_P$ on $\emptyset$ until both the set of atoms that have a $\mathcal{F}_0$ value and the set of atoms having $\mathcal{T}_0$ value, stabilize. We keep all these atoms whose values have stabilized and reset the values of all remaining atoms to the next false value (namely $\mathcal{F}_1$). The procedure is repeated until the $\mathcal{F}_1$ and $\mathcal{T}_1$ values stabilize, and we reset the remaining atoms to a value equal to $\mathcal{F}_2$, and so on. Since the Herbrand Base of $P$ is countable, there exists a countable ordinal $\delta$ for which this process will not produce any new atoms having $\mathcal{F}_\delta$ or $\mathcal{T}_\delta$ values. At this point we stop iteration and reset all remaining atoms to the value 0." 
\begin{definition}\label{limes aleph_1}\textup{Let $P$ be a program, $I$ an interpretation, and $\alpha\in\aleph_1$ such that $I\sqsubseteq_\alpha T_P(I)$. We define by recursion on the ordinal $\beta\in\Omega$ the interpretation $T^\beta_{P,\alpha}(I)$ as follows: \\$T_{P,\alpha}^0(I):=I$ and if  $\beta$ is a successor ordinal, then $T^{\beta}_{P,\alpha}:=T_P(T^{\beta-1}_{P,\alpha})$. If $0<\beta$ is a limit ordinal and $A\in H_B$, then
$$T^{\beta}_{P,\alpha}(I)(A):=\begin{cases} I(A), &\text{if }\textup{deg}(I(A))<\alpha\\
 \mathcal{T}_{\alpha}, &\text{if }A\in\bigcup_{\gamma\in\beta} T^\gamma_{P,\alpha}(I)\|\mathcal{T}_{\alpha}\\ \mathcal{F}_{\alpha}, &\text{if }A\in\bigcap_{\gamma\in\beta} T^\gamma_{P,\alpha}(I)\|\mathcal{F}_{\alpha}\\
\mathcal{F}_{\alpha+1},&\text{otherwise}\end{cases}.$$
}\end{definition}
\begin{lemma}\label{Ketteneigenschaften} Let $P$ be a program, $I$ an interpretation and $\alpha\in\aleph_1$ such that $I\sqsubseteq_\alpha T_P(I)$. Then the following holds:\begin{enumerate}
\item For all limit ordinals $0<\gamma\in\Omega$ and all interpretations $M$ the condition $\forall\beta<\gamma:T_{P,\alpha}^\beta(I)\sqsubseteq_\alpha M$ implies $T^\gamma_{P,\alpha}(I)\sqsubseteq_{\alpha} M$.
\item For all $\beta\leq\gamma\in\Omega$ the property $T^\beta_{P,\alpha}(I)\sqsubseteq_{\alpha}T^{\gamma}_{P,\alpha}(I)$ holds.
\end{enumerate}
\end{lemma}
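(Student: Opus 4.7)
The plan is to prove parts 1 and 2 simultaneously by transfinite induction on $\gamma$, because they are genuinely intertwined: the special case $\beta=0$ of part 2 says $I\sqsubseteq_\alpha T^\beta_{P,\alpha}(I)$ for all $\beta$, which is needed to handle the low-degree agreement clause of part 1, while part 1 is the only tool available to push part 2 across a successor of a limit ordinal.

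For part 1, fixing a limit $\gamma$ and an $M$ with $T^\beta_{P,\alpha}(I)\sqsubseteq_\alpha M$ for all $\beta<\gamma$, I will verify the three defining clauses of $T^\gamma_{P,\alpha}(I)\sqsubseteq_\alpha M$ by inspecting the four cases in Definition \ref{limes aleph_1}. The $\mathcal{T}_\alpha$-inclusion is immediate: if $T^\gamma_{P,\alpha}(I)(A)=\mathcal{T}_\alpha$, then some $T^{\beta_0}_{P,\alpha}(I)(A)=\mathcal{T}_\alpha$ and the hypothesis carries this over. For the $\mathcal{F}_\alpha$-inclusion, if $M(A)=\mathcal{F}_\alpha$, then the hypothesis forces every $T^\beta_{P,\alpha}(I)(A)=\mathcal{F}_\alpha$, and a degree argument rules out the first case of the definition (since $\textup{deg}(I(A))<\alpha$ would propagate to $T^\beta_{P,\alpha}(I)(A)$), so the third case applies. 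Low-degree agreement uses the chain $T^\gamma_{P,\alpha}(I)(A)=I(A)=T^\beta_{P,\alpha}(I)(A)=M(A)$, where the middle equality comes from the $\beta=0$ case of the inductive hypothesis for part 2.

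For part 2 I induct on $\gamma$, proving $T^\beta_{P,\alpha}(I)\sqsubseteq_\alpha T^\gamma_{P,\alpha}(I)$ for all $\beta\leq\gamma$. By transitivity of $\sqsubseteq_\alpha$, at a successor $\gamma=\delta+1$ it suffices to treat $\beta=\delta$. When $\delta=0$ this is the hypothesis $I\sqsubseteq_\alpha T_P(I)$; when $\delta$ is a successor I apply $\alpha$-monotonicity of $T_P$ (Lemma \ref{alphamonotonie}) to the inductive hypothesis. The genuine difficulty, and the main obstacle of the proof, is $\delta$ a limit: here neither $\alpha$-monotonicity nor the inductive hypothesis alone reach from $T^\delta_{P,\alpha}(I)$ to $T_P(T^\delta_{P,\alpha}(I))=T^{\delta+1}_{P,\alpha}(I)$. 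I resolve this by invoking part 1 with $M:=T^{\delta+1}_{P,\alpha}(I)$: for each $\mu<\delta$, combining $T^\mu_{P,\alpha}(I)\sqsubseteq_\alpha T^\delta_{P,\alpha}(I)$ (IH) with $\alpha$-monotonicity produces $T^{\mu+1}_{P,\alpha}(I)\sqsubseteq_\alpha T^{\delta+1}_{P,\alpha}(I)$, which together with $T^\mu_{P,\alpha}(I)\sqsubseteq_\alpha T^{\mu+1}_{P,\alpha}(I)$ (IH) and transitivity gives $T^\mu_{P,\alpha}(I)\sqsubseteq_\alpha T^{\delta+1}_{P,\alpha}(I)$ for every $\mu<\delta$, whereupon part 1 yields $T^\delta_{P,\alpha}(I)\sqsubseteq_\alpha T^{\delta+1}_{P,\alpha}(I)$. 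Finally, for $\gamma$ itself a limit, the step $T^\beta_{P,\alpha}(I)\sqsubseteq_\alpha T^\gamma_{P,\alpha}(I)$ is a direct reading of the four-case definition: the $\mathcal{F}_\alpha$-inclusion comes from the intersection in the third case, the $\mathcal{T}_\alpha$-inclusion from the union in the second (after excluding the first case by a degree argument), and low-degree agreement from the first case combined with $I\sqsubseteq_\alpha T^\beta_{P,\alpha}(I)$.
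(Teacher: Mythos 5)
Your proof is correct and takes essentially the same route as the paper, which only sketches this lemma: part 1 by direct inspection of Definition~\ref{limes aleph_1}, and part 2 by transfinite induction using the hypothesis $I\sqsubseteq_\alpha T_P(I)$ at the base, $\alpha$-monotonicity at successors, transitivity of $\sqsubseteq_\alpha$, and part 1 to get across the successor of a limit ordinal. Your write-up correctly identifies that last case as the one place where part 1 is indispensable, which is exactly the role the paper's sketch assigns to it (the only cosmetic difference is that your low-degree and $\mathcal{F}_\alpha$ clauses in part 1 route through the $\beta=0$ instance of part 2, whereas the $\beta=0$ instance of part 1's own hypothesis, $I\sqsubseteq_\alpha M$, already suffices there).
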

\begin{proof} \textbf{1.}: The proof follows directly from the above definition.
\\\textbf{2.}: One can prove the second statement with induction, using the assumption $I\sqsubseteq_\alpha T_P(I)$, the fact that $T_P$ is $\alpha$-monotonic, the fact that $\sqsubseteq_\alpha$ is transitive and at limit stage the first statement of this lemma.\qed 
\end{proof}
At this point, we have to consider a theorem of Zermelo-Fraenkel axiomatic set theory with the Axiom of Choice (ZFC). In the case of normal logic programs this theorem is not necessary, because in the bodies of normal logic programs do not appear ``$\forall$" or ``$\exists$". One can find the proof of the theorem in \cite{Jech}.
\begin{definition}\textup{Let $\alpha>0$ be a limit ordinal. We say that an increasing $\beta$-sequence $(\alpha_\zeta)_{\zeta<\beta}$, $\beta$ limit ordinal, is \textit{cofinal} in $\alpha$ if $\textup{sup}\{\alpha_\zeta;\zeta<\beta\}=\alpha$. Similarly, $A\subseteq\alpha$ is \textit{cofinal} in $\alpha$ if $\textup{sup}A=\alpha$. If $\alpha$ is an infinite limit ordinal, the \textit{cofinality} of $\alpha$ is $cf(\alpha)=$ ``the least limit ordinal $\beta$ such that there is an increasing $\beta$-sequence $(\alpha_\zeta)_{\zeta<\beta}$ with $\textup{sup}\{\alpha_\zeta;\zeta<\beta\}=\alpha$''. An infinite cardinal $\aleph_\alpha$ is \textit{regular} if $cf(\aleph_\alpha)=\aleph_\alpha$.}\end{definition}
\begin{theorem}\label{aleph1regulär} Every cardinal of the form $\aleph_{\alpha+1}$ is regular. Particularly, $\aleph_1$ is regular.\end{theorem}
\begin{theorem}[Extension Theorem II]\label{Fortsetzungssatz2}Let $P$ be a program, $I$ an interpretation, and $\alpha\in\aleph_1$ such that $I\sqsubseteq_\alpha T_P(I)$. Then for every formula $\phi\in \textup{Form}$ and every assignment $h$ the following hold: \begin{enumerate}
\item $\llbracket \phi\rrbracket_h^{T^{\aleph_1}_{P,\alpha}(I)}=\llbracket \phi\rrbracket_h^I, \;\;\quad\quad\quad\quad\;\;\text{if $\textup{deg}(\llbracket \phi\rrbracket_h^I)<\alpha$ \,\,\quad\quad\quad\quad\quad\quad\quad\;\,\textup{(C1)}}$

\item $\llbracket \phi\rrbracket_h^{T^{\aleph_1}_{P,\alpha}(I)}=\mathcal{T}_\alpha, 
\text{ \quad\quad\quad\;\;\quad\quad if $\llbracket\phi\rrbracket_h^{T^{i}_{P,\alpha}(I)}=\mathcal{T}_\alpha$ for some $i\in\aleph_1$ \,\,\textup{(C2)}}$

\item $\llbracket \phi\rrbracket_h^{T^{\aleph_1}_{P,\alpha}(I)}=\mathcal{F}_\alpha,\text{ \quad\quad\quad\quad\;\quad if $\llbracket \phi\rrbracket_h^{T^{i}_{P,\alpha}(I)}=\mathcal{F}_\alpha$ for all $i\in\aleph_1$ \,\,\;\;\;\,\textup{(C3)}}$

\item $\mathcal{F}_\alpha<\llbracket \phi\rrbracket_h^{T^{\aleph_1}_{P,\alpha}(I)}<\mathcal{T}_\alpha\quad\quad\quad\,\,\Leftrightarrow\text{not\textup{(C1)} and not\textup{(C2)} and not\textup{(C3)}}$
\end{enumerate}
\end{theorem}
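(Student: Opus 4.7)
The plan is to prove the four clauses together by induction on $\phi$, relying throughout on Lemma~\ref{Ketteneigenschaften}.2, which supplies $I=T^0_{P,\alpha}(I)\sqsubseteq_\alpha T^{\aleph_1}_{P,\alpha}(I)$ and $T^i_{P,\alpha}(I)\sqsubseteq_\alpha T^{\aleph_1}_{P,\alpha}(I)$ for every $i<\aleph_1$; Extension Theorem~I then applies to each such pair. Clause (C1) is Theorem~\ref{Fortsetzungssatz1}.3 applied to $I$ and $T^{\aleph_1}_{P,\alpha}(I)$, and clause (C2) is Theorem~\ref{Fortsetzungssatz1}.2 applied to $T^i_{P,\alpha}(I)$ and $T^{\aleph_1}_{P,\alpha}(I)$; both come for free, without induction on the formula. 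The same two applications show that the antecedents of (C1), (C2), (C3) are pairwise exclusive, so the $\Rightarrow$-direction of (C4) is a contrapositive once (C3) has been established.

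The substantive work is (C3) and the converse of (C2)---that $\llbracket\phi\rrbracket^{T^{\aleph_1}_{P,\alpha}(I)}_h=\mathcal{T}_\alpha$ forces some $i<\aleph_1$ to already realize this value, which is the missing ingredient for the $\Leftarrow$-direction of (C4). The atomic case is immediate from the four clauses of Definition~\ref{limes aleph_1} at the limit $\aleph_1$. For $\phi=\neg\psi$ the statement reduces to (C1) for $\psi$, since the semantics of negation sends values of degree $\alpha-1<\alpha$ to values of degree $\alpha$ (and the hypotheses of (C2), (C3) are vacuous unless $\alpha$ is a successor). For $\phi=\psi_1\wedge\psi_2$ and $\psi_1\vee\psi_2$ I would case-split on which of (C1)--(C4) each operand satisfies and extract witnesses via Lemma~\ref{sup inf existenz annahme}; the key step for (C3) is that at each $i$ at least one operand equals $\mathcal{F}_\alpha$, so the two index sets cover $\aleph_1$, one of them is uncountable and hence cofinal, and Theorem~\ref{Fortsetzungssatz1}.1 promotes cofinal attainment of $\mathcal{F}_\alpha$ to attainment at every $i$, placing that operand into (C3) at $T^{\aleph_1}_{P,\alpha}(I)$ by the inductive hypothesis.

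The quantifier cases are where the main obstacle lies, and where the regularity of $\aleph_1$ enters essentially for $\forall v(\psi)$. Representatively, for (C3) of $\forall v(\psi)$ suppose $\inf\{\llbracket\psi\rrbracket^{T^i_{P,\alpha}(I)}_{h[v\mapsto u]};\,u\in H_U\}=\mathcal{F}_\alpha$ for every $i<\aleph_1$. For each $u$, Theorem~\ref{Fortsetzungssatz1}.1 makes the set $S_u:=\{i<\aleph_1;\,\llbracket\psi\rrbracket^{T^i_{P,\alpha}(I)}_{h[v\mapsto u]}=\mathcal{F}_\alpha\}$ downward-closed in $\aleph_1$, and Lemma~\ref{sup inf existenz annahme} applied stagewise gives $\bigcup_{u\in H_U}S_u=\aleph_1$. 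If no single $S_u$ already equalled $\aleph_1$, each would be bounded by some $\beta_u<\aleph_1$; since $H_U$ is countable, regularity of $\aleph_1$ would give $\sup_u\beta_u<\aleph_1$, contradicting $\bigcup_u S_u=\aleph_1$. So some $u_*$ has $S_{u_*}=\aleph_1$, and the inductive (C3) for $\psi$ delivers $\llbracket\psi\rrbracket^{T^{\aleph_1}_{P,\alpha}(I)}_{h[v\mapsto u_*]}=\mathcal{F}_\alpha$ as the required witness. The converse of (C2) for $\forall$ is handled analogously, by synchronizing the countably many inductively supplied stabilization indices $i_u$ into a single $i^*=\sup_u i_u<\aleph_1$. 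The $\exists$-cases are easier and avoid regularity: Theorem~\ref{Fortsetzungssatz1}.1 already forces constancy of each sequence $(\llbracket\psi\rrbracket^{T^i_{P,\alpha}(I)}_{h[v\mapsto u]})_i$ for (C3), and Lemma~\ref{sup inf existenz annahme} yields a single witness for the converse of (C2). The uniform synchronization of countably many stabilization moments in the $\forall$-cases is the genuine hurdle, and it is exactly what the regularity of $\aleph_1$ provides.
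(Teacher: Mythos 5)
Your proposal follows essentially the same route as the paper's proof: clauses 1 and 2 fall out of Lemma~\ref{Ketteneigenschaften} together with Theorem~\ref{Fortsetzungssatz1}; clause 3 and the converse of clause 2 (the paper's Lemma~\ref{Hilf FS II}) are proved by structural induction on $\phi$, with the regularity of $\aleph_1$ carrying the $\forall$-case exactly as you describe; and clause 4 follows by contraposition from the first three. The only cosmetic differences are that you phrase the $\forall$-argument via the downward-closed index sets $S_u$ where the paper uses the function $\zeta$ on the chosen witnesses $u_i$, and that you spell out the $\wedge/\vee$ cases which the paper omits as being subsumed by the quantifier cases.
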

\begin{proof} \textit{1. and 2.:} We get this using Lemma \ref{Ketteneigenschaften} and Theorem \ref{Fortsetzungssatz1}. \\\textit{3.:} We show this by induction on $\phi$. We define $I_i:=T^{i}_{P,\alpha}(I)$ and $I_\infty:=T^{\aleph_1}_{P,\alpha}(I)$. Moreover, we use $I_H(X)$ as an abbreviation for \begin{center}``for all assignments $g$ the property $\forall i\in\aleph_1(\llbracket X\rrbracket_g^{I_i}=\mathcal{F}_\alpha )\Rightarrow\llbracket X\rrbracket_g^{I_\infty}=\mathcal{F}_\alpha$ holds''.\end{center}
\textit{Case 1:} $\phi=P_k(t_1,...,t_{s_k})$ or $=\top,\bot$. This follows directly from Definition \ref{limes aleph_1} respectively from Definition \ref{wfunktion}.
\\\textit{Case 2:} $\phi=\neg(A)$. Assuming $\forall i\in\aleph_1$: $\llbracket \phi\rrbracket_{h}^{I_i}=\mathcal{F}_{\alpha}$ we conclude $\forall i\in\aleph_1:\llbracket A\rrbracket_h^{I_i}=\mathcal{T}_{\alpha-1}$. Then, by Theorem \ref{Fortsetzungssatz1}, we get $\llbracket A\rrbracket_h^{I_\infty}=\mathcal{T}_{\alpha-1}$ and this implies $\llbracket \phi\rrbracket_h^{I_\infty}=\mathcal{F}_{\alpha}$.
\\\textit{Case 3:}  $\phi=A\wedge B$ or $\phi=A\vee B$. The following cases         are more general than this case. Therefore, we will not give a proof here.\\\textit{Case 4:}  $\phi=\exists v(A)$. We assume that $I_H(A)$ and for every $i\in\aleph_1$ we assume that $\llbracket \phi\rrbracket_h^{I_i}=\mathcal{F}_{\alpha}$. This implies $\textup{sup}\{\llbracket A\rrbracket_{h[v\mapsto u]}^{I};\;u\in H_U\}=\mathcal{F}_{\alpha}$ as well as $\forall i\in\aleph_1\forall u\in H_U:\llbracket A\rrbracket_{h[v\mapsto u]}^{I_i}\leq \mathcal{F}_\alpha$. Now we show by case distinction that $\forall u\in H_U:\llbracket A\rrbracket_{h[v\mapsto u]}^{I_\infty}=\llbracket A\rrbracket_{h[v\mapsto u]}^I$ and this obviously implies $\llbracket \phi\rrbracket_h^{I_\infty}=\mathcal{F}_{\alpha}$. First we consider the case  $\llbracket A\rrbracket_{h[v\mapsto u]}^I<\mathcal{F}_\alpha$. Then, using Lemma \ref{Ketteneigenschaften} and Theorem \ref{Fortsetzungssatz1}, we get that $\llbracket A\rrbracket_{h[v\mapsto u]}^{I_\infty}=\llbracket A\rrbracket_{h[v\mapsto u]}^I.$ At least, we consider the other case $\llbracket A\rrbracket_{h[v\mapsto u]}^I=\mathcal{F}_\alpha$. We know that $\forall i\in\aleph_1:\llbracket A\rrbracket_{h[v\mapsto u]}^{I_i}\leq \mathcal{F}_\alpha$. But this implies $\forall i\in\aleph_1:\llbracket A\rrbracket_{h[v\mapsto u]}^{I_i}= \mathcal{F}_\alpha$, since $\exists i\in\aleph_1:\llbracket A\rrbracket_{h[v\mapsto u]}^{I_i}< \mathcal{F}_\alpha$ would imply (using Lemma \ref{Ketteneigenschaften} and Theorem \ref{Fortsetzungssatz1}) $\llbracket A\rrbracket_{h[v\mapsto u]}^I<\mathcal{F}_\alpha$, which is a contradiction. Finally, we get, by $I_H(A)$, that $\llbracket A\rrbracket_h^{I_\infty}=\mathcal{F}_{\alpha}=\llbracket A\rrbracket_{h[v\mapsto u]}^I$.\\\textit{Case 5:} $\phi=\forall v(A)$. We assume that $I_H(A)$ and for every $i\in\aleph_1$ we assume that $\llbracket \phi\rrbracket_h^{I_i}=\mathcal{F}_{\alpha}$. Then $\forall i\in\aleph_1:\textup{inf}\{\llbracket A\rrbracket_{h[v\mapsto u]}^{I_i};\;u\in H_U\}=\mathcal{F}_{\alpha}.$ This implies, using Lemma \ref{sup inf existenz annahme}, $\forall i\in\aleph_1\exists u\in H_U:\llbracket A\rrbracket_{h[v\mapsto u]}^{I_i}=\mathcal{F}_\alpha$. Next we choose for every $i\in \aleph_1$ an atom $u_i\in H_U$ with $\llbracket A\rrbracket_{h[v\mapsto u_i]}^{I_i}=\mathcal{F}_\alpha$ (Remark: We do not need the Axiom of Choice because $H_U$ is countable). Then, using Lemma \ref{Ketteneigenschaften} and Theorem \ref{Fortsetzungssatz1}, $\forall i\in\aleph_1\forall j\leq i\in\aleph_{1}:\llbracket A\rrbracket_{h[v\mapsto u_i]}^{I_j}=\mathcal{F}_\alpha$ . This implies that the mapping $$\zeta:\left\{ u_i;i\in\aleph_1 \right\}\rightarrow\aleph_1\cup\{\aleph_1\}:u\mapsto \begin{cases} \textup{min}\{j\in\aleph_1; \llbracket A\rrbracket_{h[v\mapsto u]}^{I_j}\neq \mathcal{F}_\alpha\}, &\text{if $\textup{min}$ exists}\\
 \aleph_1, &\text{otherwise} \end{cases}$$
 has the properties $\forall i\in\aleph_1:\zeta(u_i)>i$ and $\textup{sup}\{\zeta (u_i);i\in\aleph_1\}=\aleph_1$. We assume now that  $\forall u\in H_U\exists j\in\aleph_1:\llbracket A\rrbracket_{h[v\mapsto u]}^{I_j}\neq \mathcal{F}_\alpha$. Then $\zeta(\{u_i;i\in\aleph_1\})$ is a countable subset of $\aleph_1$ and moreover cofinal in $\aleph_1$. But this is a contradiction to Theorem \ref{aleph1regulär}. Therefore we know that there exists an atom $u^*\in H_U$  such that $\forall i\in\aleph_1:\llbracket A\rrbracket_{h[v\mapsto u^*]}^{I_i}=\mathcal{F}_\alpha.$  Then, using  $I_H(A)$, we get that $\llbracket A\rrbracket_{h[v\mapsto u^*]}^{I_\infty}=\mathcal{F}_\alpha.$ This implies $\llbracket \phi\rrbracket_h^{I_\infty}\leq \mathcal{F}_\alpha$ and finally, using $\llbracket \phi\rrbracket_h^{I}=\mathcal{F}_{\alpha}$, Lemma \ref{Ketteneigenschaften} and Theorem \ref{Fortsetzungssatz1}, we get that $\llbracket \phi\rrbracket_h^{I_\infty}= \mathcal{F}_\alpha.$
\\\\\textit{ 4.:}``$\Rightarrow$'': We prove this by the method of contrapositive. We assume that (C1) or (C2) or (C3). Then, using 1., 2., and 3., we get that not($\mathcal{F}_\alpha<\llbracket \phi\rrbracket_h^{I_\infty}<\mathcal{T}_\alpha$) holds. 
\\``$\Leftarrow$'': We shall first consider the following Lemma.
\begin{lemma} \label{Hilf FS II}Under the same conditions as in Theorem \ref{Fortsetzungssatz2}  for every formula $\phi\in \textup{Form}$ and every assignment $h$ the following hold: $$\llbracket \phi\rrbracket_h^{I_\infty}=\mathcal{T}_\alpha 
\text{\quad$\Rightarrow\quad\llbracket \phi\rrbracket_h^{I_i}=\mathcal{T}_\alpha$ for some $i\in\aleph_1$}$$
\end{lemma}
\begin{proof} This proof is similar to the proof of Theorem \ref{Fortsetzungssatz2} statement 3. (see Appendix).\qed\end{proof}
We prove ``$\Leftarrow$'' also by the method of contrapositive. We assume that $\mathcal{F}_\alpha<\llbracket \phi\rrbracket^{I_\infty}_{h}<\mathcal{T}_\alpha$  does not hold. We consider the three possible cases $\textup{deg}(\llbracket \phi\rrbracket^{I_\infty}_{h})<\alpha$, $\llbracket \phi\rrbracket^{I_\infty}_{h}=\mathcal{F}_\alpha$, and $\llbracket \phi\rrbracket^{I_\infty}_{h}=\mathcal{T}_\alpha$. Let us consider the first case (resp. the second case). Then, using Lemma \ref{Ketteneigenschaften} and Theorem \ref{Fortsetzungssatz1}, (C1) (resp. (C3)) holds. Now, we  consider  the latter case. Using Lemma \ref{Hilf FS II} we get that (C2) holds. Finally, in every case (C1) or (C2) or (C3) holds.\qed\end{proof}
\begin{definition}\label{Def vereinigung}\textup{Let $\alpha$ be a countable ordinal and for every $\gamma<\alpha$  let $I_\gamma$ be an interpretation such that $\forall\zeta\leq\gamma: I_\zeta=_{\zeta}I_\gamma.$ Then the union of the interpretations $I_\gamma$ ($\gamma<\alpha$) is a well-defined interpretation and  given by the following definition:
$$\bigsqcup_{\gamma<\alpha} I_\gamma\,\,(A):=\begin{cases}\mathcal{F}_\zeta, & \text{if }\zeta<\alpha\;\&\;I_\zeta(A)=\mathcal{F}_\zeta \\
\mathcal{T}_\zeta, & \text{if }\zeta<\alpha\;\&\;I_\zeta(A)=\mathcal{T}_\zeta \\
\mathcal{F}_\alpha, & \text{otherwise} \\
\end{cases} \;\;\;\;\;\;\;(A\in H_B)$$
}\end{definition}
\begin{remark} Using $\forall\zeta\leq\gamma: I_\zeta=_{\zeta}I_\gamma$  it is easy to prove that the union  $\bigsqcup_{\gamma<\alpha} I_\gamma$ is a well-defined interpretation. Particularly if $\alpha=0,$  then the union is equal to the interpretation that     maps all atoms of $H_B$ to the truth value $\mathcal{F}_0$. This interpretation is sometimes denoted by $\emptyset$.\end{remark} 
\begin{lemma}\label{KonstrHilf1}Let $P$ be a program, $\alpha$  be a countable ordinal and for all $\gamma<\alpha$ an interpretation $I_\gamma$ is given such that $\forall\zeta<\gamma: I_\zeta=_{\zeta}I_\gamma.$ Then the following holds: $$\forall\gamma<\alpha\left( I_\gamma\sqsubseteq_{\gamma+1}T_P(I_\gamma)\right)\;\Rightarrow\;\bigsqcup_{\gamma<\alpha} I_\gamma\sqsubseteq_{\alpha}T_P(\bigsqcup_{\gamma<\alpha} I_\gamma)$$ 
\end{lemma}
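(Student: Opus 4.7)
The plan is to set $U := \bigsqcup_{\gamma<\alpha} I_\gamma$ and verify the three defining clauses of $U \sqsubseteq_\alpha T_P(U)$, namely (a) $U =_\beta T_P(U)$ for every $\beta < \alpha$, (b) $T_P(U)\Vert\mathcal{F}_\alpha \subseteq U\Vert\mathcal{F}_\alpha$, and (c) $U\Vert\mathcal{T}_\alpha \subseteq T_P(U)\Vert\mathcal{T}_\alpha$.

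The decisive preliminary observation is that $U =_\gamma I_\gamma$ for every $\gamma < \alpha$. This follows directly from Definition \ref{Def vereinigung} together with the compatibility assumption $I_\zeta =_\zeta I_\gamma$: for $\beta \leq \gamma$, an atom $A$ satisfies $U(A) = \mathcal{F}_\beta$ exactly when $I_\beta(A) = \mathcal{F}_\beta$, which by compatibility is equivalent to $I_\gamma(A) = \mathcal{F}_\beta$, and the same argument handles $\mathcal{T}_\beta$.

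To establish clause (a), I would fix $\beta < \alpha$ and take $\gamma := \beta$. The equality $U =_\gamma I_\gamma$ collapses the two set-inclusions at level $\gamma$ required by $\sqsubseteq_\gamma$ into equalities, so it entails simultaneously $U \sqsubseteq_\gamma I_\gamma$ and $I_\gamma \sqsubseteq_\gamma U$. Applying $\gamma$-monotonicity of $T_P$ (Lemma \ref{alphamonotonie}) in both directions yields $T_P(U) =_\gamma T_P(I_\gamma)$. Since the hypothesis $I_\gamma \sqsubseteq_{\gamma+1} T_P(I_\gamma)$ gives in particular $I_\gamma =_\gamma T_P(I_\gamma)$, chaining these three equivalences produces $U =_\gamma I_\gamma =_\gamma T_P(I_\gamma) =_\gamma T_P(U)$, which is exactly $U =_\beta T_P(U)$.

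Clause (c) is immediate: by Definition \ref{Def vereinigung}, the only values $U$ can assign are $\mathcal{F}_\zeta$ or $\mathcal{T}_\zeta$ with $\zeta<\alpha$, or $\mathcal{F}_\alpha$, so $U\Vert\mathcal{T}_\alpha=\emptyset$. For clause (b), suppose $T_P(U)(A)=\mathcal{F}_\alpha$ and inspect the possibilities for $U(A)$: if $U(A)=\mathcal{F}_\zeta$ or $U(A)=\mathcal{T}_\zeta$ with $\zeta<\alpha$, then clause (a) applied to $\beta:=\zeta$ forces $T_P(U)(A)=U(A)$, contradicting $T_P(U)(A)=\mathcal{F}_\alpha$; hence the remaining possibility $U(A)=\mathcal{F}_\alpha$ must hold. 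The step most prone to sloppiness, and thus the one I would emphasize, is the derivation of clause (a): one must extract both $U \sqsubseteq_\gamma I_\gamma$ and $I_\gamma \sqsubseteq_\gamma U$ from the mere equivalence $U =_\gamma I_\gamma$ in order to apply $\gamma$-monotonicity twice and obtain $T_P(U) =_\gamma T_P(I_\gamma)$; with that bookkeeping in place the whole verification is routine.
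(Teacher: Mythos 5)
Your proposal is correct and follows essentially the same route as the paper: establish $U=_\beta T_P(U)$ for all $\beta<\alpha$ by combining $U=_\beta I_\beta$, the $\beta$-monotonicity of $T_P$ applied in both directions, and the hypothesis $I_\beta\sqsubseteq_{\beta+1}T_P(I_\beta)$, then conclude using the fact that $U$ never takes a value $w$ with $\mathcal{F}_\alpha<w\leq\mathcal{T}_\alpha$. You merely spell out explicitly the final step that the paper dismisses as obvious (deriving the two inclusions at level $\alpha$), which is fine.
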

\begin{proof}
 We assume that \begin{equation}\forall\gamma<\alpha\left( I_\gamma\sqsubseteq_{\gamma+1}T_P(I_\gamma)\right) \label{lemma6star}.\end{equation} First, we prove that $\forall\beta<\alpha:\;\bigsqcup_{\gamma<\alpha} I_\gamma=_{\beta}T_P(\bigsqcup_{\gamma<\alpha} I_\gamma).$ For all $\beta<\alpha$ we know that $I_\beta=_\beta\bigsqcup_{\gamma<\alpha} I_\gamma$. Then, using Lemma \ref{alphamonotonie}, $\forall\beta<\alpha:T_P({I_\beta})=_{\beta}T_P(\bigsqcup_{\gamma<\alpha} I_\gamma)$. This implies for all $\beta<\alpha$ the property $\bigsqcup_{\gamma<\alpha} I_\gamma=_\beta I_\beta=_\beta^{(\ref{lemma6star})} T_P({I_\beta})=_{\beta}T_P(\bigsqcup_{\gamma<\alpha} I_\gamma).$ We  know that $\bigsqcup_{\gamma<\alpha} I_\gamma$ does not map to truth values $w$ such that $\mathcal{F}_\alpha<w\leq \mathcal{T}_\alpha$. And this obviously implies $\;\bigsqcup_{\gamma<\alpha} I_\gamma\sqsubseteq_{\alpha}T_P(\bigsqcup_{\gamma<\alpha} I_\gamma).$\qed\end{proof}
\begin{lemma}\label{KonstruktionsHilfe2}Let $P$ be a program, $\alpha$ a countable ordinal, and $I$ an interpretation. Then the following holds: 
$$I\sqsubseteq_\alpha T_P(I)\;\;\Rightarrow\;\;T^{\aleph_1}_{P,\alpha}(I)\sqsubseteq_{\alpha+1}T_P(T^{\aleph_1}_{P,\alpha}(I))$$
\end{lemma}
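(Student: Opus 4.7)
The plan is to verify the three pieces making up $I_\infty\sqsubseteq_{\alpha+1}T_P(I_\infty)$, writing $I_\infty:=T^{\aleph_1}_{P,\alpha}(I)$ and $I_\gamma:=T^{\gamma}_{P,\alpha}(I)$: namely (a) $I_\infty=_\alpha T_P(I_\infty)$, (b) $T_P(I_\infty)\Vert\mathcal{F}_{\alpha+1}\subseteq I_\infty\Vert\mathcal{F}_{\alpha+1}$, and (c) $I_\infty\Vert\mathcal{T}_{\alpha+1}\subseteq T_P(I_\infty)\Vert\mathcal{T}_{\alpha+1}$. Inspecting Definition \ref{limes aleph_1} at $\beta=\aleph_1$ shows that $I_\infty$ only takes the values $\mathcal{T}_\alpha$, $\mathcal{F}_\alpha$, $\mathcal{F}_{\alpha+1}$ together with the values of $I$ of degree $<\alpha$. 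Hence $I_\infty\Vert\mathcal{T}_{\alpha+1}=\emptyset$, which handles (c), and (b) follows from (a) by contraposition: if $T_P(I_\infty)(A)=\mathcal{F}_{\alpha+1}$, then (a) forbids $I_\infty(A)$ from being any value of degree $\leq\alpha$, leaving $\mathcal{F}_{\alpha+1}$ as the only option.

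For (a), the sub-$\alpha$ agreement is routine: Lemma \ref{Ketteneigenschaften} statement 2, with $\gamma=\aleph_1$, gives $I\sqsubseteq_\alpha I_\infty$, so by Lemma \ref{alphamonotonie} also $T_P(I)\sqsubseteq_\alpha T_P(I_\infty)$, and chaining with the hypothesis $I\sqsubseteq_\alpha T_P(I)$ yields $I_\infty=_\beta I=_\beta T_P(I)=_\beta T_P(I_\infty)$ for every $\beta<\alpha$. For the degree-$\alpha$ agreement on $\mathcal{T}_\alpha$: if $I_\infty(A)=\mathcal{T}_\alpha$, pick the least $i<\aleph_1$ with $I_i(A)=\mathcal{T}_\alpha$; this $i$ is not a limit (else the second clause of Definition \ref{limes aleph_1} would contradict minimality), so Lemma \ref{sup inf existenz annahme} provides a ground rule $A\leftarrow\phi$ with $\llbracket\phi\rrbracket^{I_{i-1}}=\mathcal{T}_\alpha$ (respectively $\llbracket\phi\rrbracket^I=\mathcal{T}_\alpha$ when $i=0$, using the hypothesis). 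Theorem \ref{Fortsetzungssatz2} statement 2 then gives $\llbracket\phi\rrbracket^{I_\infty}=\mathcal{T}_\alpha$, so $T_P(I_\infty)(A)\geq\mathcal{T}_\alpha$; any strictly greater value is ruled out by the sub-$\alpha$ agreement. Conversely, if $T_P(I_\infty)(A)=\mathcal{T}_\alpha$, a witnessing rule with $\llbracket\phi\rrbracket^{I_\infty}=\mathcal{T}_\alpha$ exists by Lemma \ref{sup inf existenz annahme}; among the clauses (C1), (C2), (C3) in Theorem \ref{Fortsetzungssatz2} statement 4 only (C2) is compatible with this value, so some $I_{i+1}(A)=T_P(I_i)(A)\geq\mathcal{T}_\alpha$; the sub-$\alpha$ agreement excludes values of degree $<\alpha$, pinning $I_{i+1}(A)=\mathcal{T}_\alpha$ and hence $I_\infty(A)=\mathcal{T}_\alpha$.

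The $\mathcal{F}_\alpha$ case is the main obstacle, because the supremum defining $T_P$ need not be attained on a false value (Lemma \ref{sup inf existenz annahme}) and $\alpha$ may be a limit ordinal. Assume $I_\infty(A)=\mathcal{F}_\alpha$. Then Definition \ref{limes aleph_1} forces $I_\gamma(A)=\mathcal{F}_\alpha$ for every $\gamma<\aleph_1$; in particular $T_P(I)(A)=I_1(A)=\mathcal{F}_\alpha$, so every rule $A\leftarrow\phi$ satisfies $\llbracket\phi\rrbracket^I\leq\mathcal{F}_\alpha$. If $\textup{deg}(\llbracket\phi\rrbracket^I)<\alpha$, Theorem \ref{Fortsetzungssatz2} statement 1 yields $\llbracket\phi\rrbracket^{I_\infty}=\llbracket\phi\rrbracket^I$; if $\llbracket\phi\rrbracket^I=\mathcal{F}_\alpha$, then $\llbracket\phi\rrbracket^{I_\gamma}\leq\mathcal{F}_\alpha$ at every stage and any drop to degree $<\alpha$ would lift back to $I$ by Theorem \ref{Fortsetzungssatz1} statement 3 (using $I\sqsubseteq_\alpha I_\gamma$), so $\llbracket\phi\rrbracket^{I_\gamma}=\mathcal{F}_\alpha$ for every $\gamma$ and Theorem \ref{Fortsetzungssatz2} statement 3 delivers $\llbracket\phi\rrbracket^{I_\infty}=\mathcal{F}_\alpha$. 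In either case $\llbracket\phi\rrbracket^{I_\infty}=\llbracket\phi\rrbracket^I$ for every rule, giving $T_P(I_\infty)(A)=T_P(I)(A)=\mathcal{F}_\alpha$. Conversely, if $T_P(I_\infty)(A)=\mathcal{F}_\alpha$, Lemma \ref{alphamonotonie} applied to $I_\gamma\sqsubseteq_\alpha I_\infty$ forces $I_{\gamma+1}(A)=T_P(I_\gamma)(A)=\mathcal{F}_\alpha$ for every $\gamma<\aleph_1$, and $I(A)=\mathcal{F}_\alpha$ follows from $I\sqsubseteq_\alpha T_P(I)$; a transfinite induction over $\gamma\leq\aleph_1$, the limit case of Definition \ref{limes aleph_1} handling limit stages, then yields $I_\infty(A)=\mathcal{F}_\alpha$.
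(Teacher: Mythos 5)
Your proof is correct and follows essentially the same route as the paper: the two hard inclusions ($I_\infty\Vert\mathcal{F}_\alpha\subseteq T_P(I_\infty)\Vert\mathcal{F}_\alpha$ via the case split on $\llbracket\phi\rrbracket^I\leq\mathcal{F}_\alpha$, and $T_P(I_\infty)\Vert\mathcal{T}_\alpha\subseteq I_\infty\Vert\mathcal{T}_\alpha$ via Extension Theorem II) are argued exactly as there, and the passage from $=_\alpha$ to $\sqsubseteq_{\alpha+1}$ uses the same observation about the range of $I_\infty$. The only difference is cosmetic: where the paper obtains the remaining two level-$\alpha$ inclusions abstractly from $I_\infty\sqsubseteq_\alpha T_P(I_\infty)$ (Lemma \ref{Ketteneigenschaften}, part 1, applied with $M=T_P(I_\infty)$), you verify them element-wise (the least-successor-stage argument for $\mathcal{T}_\alpha$ and the transfinite induction for $\mathcal{F}_\alpha$), which is more work but equally valid.
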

\begin{proof} Again, we define $I_i:=T^{i}_{P,\alpha}(I)$, $I_\infty:=T^{\aleph_1}_{P,\alpha}(I)$. Let us assume that $I\sqsubseteq_\alpha T_P(I)$. First we prove $I_\infty\sqsubseteq_{\alpha}T_P(I_\infty).$ Using Lemma \ref{Ketteneigenschaften} we get that $\forall\gamma<\aleph_1:I_\gamma\sqsubseteq_\alpha I_\infty$. Then, using Lemma \ref{alphamonotonie}, $\gamma\in\aleph_1:I_{\gamma+1}\sqsubseteq_\alpha T_P(I_\infty)$. Using again Lemma \ref{Ketteneigenschaften} and the transitivity of $\sqsubseteq_{\alpha}$   we get that $\forall\gamma<\aleph_1: I_{\gamma}\sqsubseteq_\alpha T_P(I_\infty)$.
Then, using the first part of Lemma \ref{Ketteneigenschaften}, $I_\infty\sqsubseteq_{\alpha}T_P(I_\infty).$
\\Let us prove now $T_P(I_\infty)\sqsubseteq_{\alpha}I_\infty$. It remains to show \begin{equation}I_\infty\parallel \mathcal{F}_\alpha\subseteq T_P(I_\infty)\parallel \mathcal{F}_{\alpha}\label{Lemma7Case1}\end{equation} as well as \begin{equation}T_P(I_\infty)\Vert \mathcal{T}_\alpha\subseteq I_\infty\Vert \mathcal{T}_\alpha \label{Lemma7(Case2)}.\end{equation} Firstly, let us prove that (\ref{Lemma7Case1}) holds and therefore we assume that $I_\infty(A)=\mathcal{F}_\alpha$ for some $A\in H_B$. Then, using the definition of  $I_\infty$, we get that for all $i\in\aleph_1$ the following holds: \begin{equation}I_{i}(A)=\mathcal{F}_\alpha\;\label{Lemma7(S1)}\end{equation} Let $A\leftarrow \phi$ be an arbitrary ground instance of $P$. We prove now that the property $\llbracket \phi\rrbracket^{I_\infty}=\llbracket \phi\rrbracket^{I}$ holds. Then, using  (\ref{Lemma7(S1)}) and the definition of the immediate consequence operator $T_P$, we get that $\mathcal{F}_\alpha=I_{1}(A)=\textup{sup}\{\llbracket C\rrbracket^{I};\;A\leftarrow C\in P_G\}$. This implies either $\llbracket \phi\rrbracket^{I}<\mathcal{F}_\alpha$ or $\llbracket \phi\rrbracket^{I}=\mathcal{F}_\alpha$. We consider the first case. Then, using Theorem \ref{Fortsetzungssatz2}, we get that $\llbracket \phi\rrbracket^{I_\infty}=\llbracket \phi\rrbracket^{I}$. In the latter case, using again (\ref{Lemma7(S1)}), we get that for all $i\in\aleph_1$ the property $\mathcal{F}_\alpha=I_{i+1}(A)=\textup{sup}\{\llbracket C\rrbracket^{I_i};\;A\leftarrow C\in P_G\}$ holds. This obviously implies $\forall i\in\aleph_1:\llbracket \phi\rrbracket^{I_{i}}\leq \mathcal{F}_\alpha$. Then, using Lemma \ref{Ketteneigenschaften} and Theorem \ref{Fortsetzungssatz1}, we get that $\forall i\in\aleph_1:\llbracket \phi\rrbracket^{I_{i}}= \mathcal{F}_\alpha$. But then the third part of Theorem \ref{Fortsetzungssatz2} finally implies that $\llbracket \phi\rrbracket^{I_\infty}= \mathcal{F}_\alpha=\llbracket \phi\rrbracket^{I}.$ \\Thus the above argumentation implies that for all  $A\leftarrow \phi$ in $P_{G}$ the equation $\llbracket \phi\rrbracket^{I_\infty}=\llbracket \phi\rrbracket^{I}$ holds.
This implies $\mathcal{F}_\alpha=I_{1}(A)=\textup{sup}\{\llbracket \phi\rrbracket^{I};\;A\leftarrow \phi\in P_G\}=\textup{sup}\{\llbracket \phi\rrbracket^{I_\infty};\;A\leftarrow \phi\in P_G\}=T_P(I_\infty)(A).$
\\Secondly, let us prove (\ref{Lemma7(Case2)}) and therefore we assume now that $T_{P}(I_\infty)(A)=\mathcal{T}_\alpha$ for some $A\in H_B$. Then $\textup{sup}\{\llbracket \phi\rrbracket^{I_\infty};\;A\leftarrow \phi\in P_G\}=\mathcal{T}_\alpha.$ This and Lemma \ref{sup inf existenz annahme} allow us to choose a ground instance $A\leftarrow \phi$ such that $\llbracket \phi\rrbracket^{I_\infty}=\mathcal{T}_\alpha$. Then, using Lemma \ref{Hilf FS II}, we can choose an ordinal $i_0\in\aleph_1$ such that $\llbracket \phi\rrbracket^{I_{i_0}}=\mathcal{T}_\alpha$. This implies $\llbracket A\rrbracket^{I_{i_0+1}}\geq \mathcal{T}_\alpha$. We know $I_{i_0+1}\sqsubseteq_{\alpha}T_P(I_\infty)$ by Lemma \ref{Ketteneigenschaften} and Lemma \ref{alphamonotonie}. But then, using Theorem 1 and the assumption of this case, $\llbracket A\rrbracket^{I_{i_0+1}}= \mathcal{T}_\alpha$ must hold. Finally, using the second part of Theorem \ref{Fortsetzungssatz2}, we get that $I_\infty(A)=\mathcal{T}_\alpha.$ \\The argumentation above  implies that  $I_\infty=_{\alpha}T_P(I_\infty)$.
We  know that $I_\infty$ does not map to truth values $w$ such that $\mathcal{F}_{\alpha+1}<w\leq \mathcal{T}_{\alpha+1}$. And this obviously implies $I_\infty\sqsubseteq_{\alpha+1}T_P(I_\infty)$.\qed\end{proof}
\begin{definition}\textup{\label{defapprox}Let $P$ be a program. We define by recursion on the  countable  ordinal $\alpha$ the approximant $M_\alpha$ of $P$ as follows:
$$M_\alpha:=\begin{cases}T^{\aleph_1}_{P,\alpha}(\bigsqcup_{\gamma<\alpha} M_{\gamma}),&\text{if $\begin{array}{c}
\forall\gamma<\alpha\forall\zeta<\gamma\left(M_\zeta=_{\zeta}M_\gamma\right) \;\& \\
\;\bigsqcup_{\gamma<\alpha} M_\gamma\sqsubseteq_{\alpha}T_P(\bigsqcup_{\gamma<\alpha} M_\gamma) \\
\end{array}$}\\\emptyset, &\text{otherwise}
\end{cases}$$}\end{definition}
\begin{theorem} \label{Eigenschaften Approxi}Let $P$ be a program, then for all $\alpha\in\aleph_1$ the following holds: \begin{enumerate}\item $\forall\gamma<\alpha\left(M_\gamma =_{\gamma}M_\alpha\right)$
\item$\bigsqcup_{\gamma<\alpha} M_\gamma\sqsubseteq_{\alpha}T_P(\bigsqcup_{\gamma<\alpha} M_\gamma)$ 
\item $M_\alpha=T^{\aleph_1}_{P,\alpha}(\bigsqcup_{\gamma<\alpha} M_{\gamma})$
\item$M_\alpha\sqsubseteq_{\alpha+1}T_P(M_\alpha)$
\end{enumerate}
\end{theorem}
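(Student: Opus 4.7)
The plan is a transfinite induction on $\alpha \in \aleph_1$ that proves all four claims simultaneously, with Lemmas \ref{KonstrHilf1} and \ref{KonstruktionsHilfe2} doing the heavy lifting. I fix $\alpha$ and assume the theorem for every $\beta < \alpha$ (for $\alpha = 0$ the hypothesis is vacuous, and the argument below still goes through with $\bigsqcup_{\gamma<\alpha} M_\gamma = \emptyset$).

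First I verify the two side conditions in the case split of Definition \ref{defapprox}. The agreement condition $\forall\gamma < \alpha\,\forall\zeta < \gamma(M_\zeta =_\zeta M_\gamma)$ is exactly property 1 at the stages $\gamma < \alpha$ supplied by the induction hypothesis, so $I := \bigsqcup_{\gamma < \alpha} M_\gamma$ is a well-defined interpretation. The second condition is then handed to me by Lemma \ref{KonstrHilf1}, whose hypothesis $\forall\gamma < \alpha(M_\gamma \sqsubseteq_{\gamma+1} T_P(M_\gamma))$ is precisely property 4 at stages $\gamma < \alpha$ from the induction hypothesis. Applying the lemma simultaneously yields property 2 for $\alpha$ and puts Definition \ref{defapprox} into its first case, so $M_\alpha = T^{\aleph_1}_{P,\alpha}(I)$, which is property 3.

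To obtain property 1 at $\alpha$, I fix any $\gamma < \alpha$. Unfolding Definition \ref{Def vereinigung} together with property 1 at stage $\gamma$ gives $I =_\gamma M_\gamma$, while Lemma \ref{Ketteneigenschaften} part 2 (whose hypothesis $I \sqsubseteq_\alpha T_P(I)$ is the property 2 just established) yields $I = T^0_{P,\alpha}(I) \sqsubseteq_\alpha T^{\aleph_1}_{P,\alpha}(I) = M_\alpha$, hence $I =_\gamma M_\alpha$; combining the two identities gives $M_\gamma =_\gamma M_\alpha$. Finally, property 4 at $\alpha$ is a single invocation of Lemma \ref{KonstruktionsHilfe2} applied to the same $I$, whose hypothesis $I \sqsubseteq_\alpha T_P(I)$ is again property 2. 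The substantive arguments have already been packaged inside Lemmas \ref{KonstrHilf1} and \ref{KonstruktionsHilfe2} (the latter itself absorbing both extension theorems and the regularity of $\aleph_1$), so the present proof is essentially a bookkeeping exercise; the only point demanding care is tracking which property is feeding which lemma at each stage of the induction, and I do not anticipate any real obstacle beyond that.
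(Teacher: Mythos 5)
Your proposal is correct and follows essentially the same route as the paper's own proof: a simultaneous transfinite induction in which Lemma \ref{KonstrHilf1} (fed by properties 1 and 4 of the induction hypothesis) yields property 2, Definition \ref{defapprox} then yields property 3, Lemma \ref{KonstruktionsHilfe2} yields property 4, and the chain $M_\gamma =_\gamma \bigsqcup_{\gamma'<\alpha} M_{\gamma'} \sqsubseteq_\alpha M_\alpha$ via Lemma \ref{Ketteneigenschaften} yields property 1. No substantive difference from the paper's argument.
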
\begin{proof} We prove this by induction on $\alpha$. We assume that the theorem holds for all $\beta<\alpha$ (induction hypothesis). We prove that it holds also for $\alpha$.  Using the induction hypothesis, we get that for every $\beta<\alpha$  the following properties hold $\forall\gamma<\beta:M_\gamma =_{\gamma}M_\beta$ as well as $M_\beta\sqsubseteq_{\beta+1}T_P(M_\beta)$.
Then, using Lemma \ref{KonstrHilf1}, we get that $\bigsqcup_{\gamma<\alpha} M_\gamma\sqsubseteq_{\alpha}T_P(\bigsqcup_{\gamma<\alpha} M_\gamma)$  (this is 2.). This together with the above definition imply $M_\alpha=T^{\aleph_1}_{P,\alpha}(\bigsqcup_{\gamma<\alpha} M_{\gamma})$ (this is 3.). Then, using 2. and 3. and Lemma \ref{KonstruktionsHilfe2}, we get that $M_\alpha\sqsubseteq_{\alpha+1}T_P(M_\alpha)$ (this is 4.). It remains to prove the first statement. We know that for all $\gamma<\alpha$ the property $M_\gamma=_\gamma\bigsqcup_{\gamma'<\alpha} M_{\gamma'}{\sqsubseteq_\alpha}^\text{(Lemma \ref{Ketteneigenschaften} \& 2.)} T^{\aleph_1}_{P,\alpha}(\bigsqcup_{\gamma'<\alpha} M_{\gamma'})=^{3.}M_{\alpha}$ holds. Then, using that $\sqsubseteq_\alpha$  is stronger than $=_\gamma$, we get that 1. also holds.\qed\end{proof}
\begin{lemma} \label{null grade}Let $P$ be a program. Then there exists an ordinal $\delta\in\aleph_1$   such that \begin{equation}\forall\gamma\geq\delta:M_\gamma\Vert \mathcal{F}_\gamma=\emptyset\text{ and }M_\gamma\Vert \mathcal{T}_\gamma=\emptyset.\label{deltaPproperty}\end{equation} 
\end{lemma}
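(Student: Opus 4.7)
The plan is to show that the set of ordinals at which the approximant $M_\gamma$ actually produces an atom of degree exactly $\gamma$ is countable, and then to invoke the regularity of $\aleph_1$ to bound it.

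Concretely, I define
$$S:=\{\gamma\in\aleph_1\;:\;M_\gamma\Vert\mathcal{F}_\gamma\neq\emptyset\text{ or }M_\gamma\Vert\mathcal{T}_\gamma\neq\emptyset\}.$$
Since $H_B$ is countable, a choice function is available without invoking the Axiom of Choice, so I pick for every $\gamma\in S$ an atom $A_\gamma\in H_B$ with $M_\gamma(A_\gamma)\in\{\mathcal{F}_\gamma,\mathcal{T}_\gamma\}$. The conclusion (\ref{deltaPproperty}) is equivalent to $S$ being bounded in $\aleph_1$, since every $\gamma\notin S$ already satisfies (\ref{deltaPproperty}) by definition.

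The key step is that $\gamma\mapsto A_\gamma$ is injective. For $\gamma_1<\gamma_2$ in $S$, part~1 of Theorem~\ref{Eigenschaften Approxi} gives $M_{\gamma_1}=_{\gamma_1}M_{\gamma_2}$, so $M_{\gamma_2}(A_{\gamma_1})=M_{\gamma_1}(A_{\gamma_1})$, a value of degree $\gamma_1$, whereas $M_{\gamma_2}(A_{\gamma_2})$ has degree $\gamma_2>\gamma_1$; hence $A_{\gamma_1}\neq A_{\gamma_2}$. Consequently $|S|\leq|H_B|=\aleph_0$.

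Finally, by regularity of $\aleph_1$ no countable subset of $\aleph_1$ is cofinal in $\aleph_1$, so $\sup S<\aleph_1$, and any ordinal $\delta\in\aleph_1$ with $\delta>\sup S$ (for instance $\delta:=\sup S+1$) satisfies the claim. The main obstacle is the injectivity of $\gamma\mapsto A_\gamma$, which expresses the ``stabilization'' that once an atom receives a value of some degree it retains that value at every later approximant; once this is extracted from Theorem~\ref{Eigenschaften Approxi}(1), everything else is a direct cardinality and cofinality count.
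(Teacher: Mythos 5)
Your proof is correct and takes essentially the same approach as the paper: both arguments rest on Theorem~\ref{Eigenschaften Approxi}(1) to show that each atom of the countable set $H_B$ can contribute a value of degree $\gamma$ at only one ordinal $\gamma$, so that only countably many ordinals are ``productive,'' and then invoke the regularity of $\aleph_1$ (Theorem~\ref{aleph1regul�r}) to bound them. The only difference is cosmetic: the paper maps each relevant atom to its unique stabilization ordinal and notes the image is countable, whereas you map each productive ordinal injectively to a witnessing atom; these are two views of the same correspondence.
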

\begin{proof} We define the subset $H_B^*$ of the Herbrand base $H_B$ by $H_B^*:=\{A\in H_B;\;\exists\gamma\in\aleph_1:M_\gamma(A)\in\{\mathcal{F}_\gamma,\mathcal{T}_\gamma\}\}.$ Then, using part one of Theorem \ref{Eigenschaften Approxi}, we know that for every $A\in H_B^*$ there is exactly one $\gamma_A$ such that $M_{\gamma_A}(A)\in\{\mathcal{F}_{\gamma_A},\mathcal{T}_{\gamma_A}\}$. Now let us define the function $\zeta$ by $\zeta: H_B^*\rightarrow\aleph_1:A\mapsto\gamma_A.$
We know that $H_B^*$ is countable. This implies that $\zeta(H_B^*)$ is also countable. Then, using Theorem \ref{aleph1regulär}, we know that $\zeta(H_B^*)$ is not cofinal in $\aleph_1$. This obviously implies that there is an ordinal $\delta\in\aleph_1$ such that $\forall A\in H_B^*:\zeta(A)<\delta$. Finally, this ordinal $\delta$ satisfies the property (\ref{deltaPproperty}).\qed\end{proof} \begin{definition}\label{delta_p}\textup{Let $P$ be a program. The lemma above justifies the definition $\delta_P:=\textup{min}\{\delta;\;\forall\gamma\geq\delta:M_\gamma\Vert \mathcal{F}_\gamma=\emptyset\text{ and }M_\gamma\Vert \mathcal{T}_\gamma=\emptyset\}\in\aleph_1$. This ordinal $\delta_P$ is called the \textit{depth} of the program  $P$.}\end{definition}\begin{definition}\label{M_P}\textup{We define the interpretation $M_P$ of a given formula-based logic program $P$ by $$M_P(A):=\begin{cases}M_{\delta_P}(A),&\text{if $\textup{deg}(M_{\delta_P}(A))<\delta_P$}\\0,&\text{otherwise}\end{cases}.$$}\end{definition} \section{Properties of the Interpretation $M_P$}
\begin{proposition}\label{M_p fixpunkt}Let $P$ be a program. The interpretation $M_P$ is a fixed point of $T_P$ (i.e., $T_P(M_{P})=M_{P}$). \end{proposition}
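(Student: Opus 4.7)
The strategy is to verify $T_P(M_P)(A)=M_P(A)$ pointwise by comparing $M_P$ with the approximants $M_\beta$ for $\beta\geq\delta_P$, using Extension Theorem I (Theorem \ref{Fortsetzungssatz1}) as the main transfer tool. First I unpack the shape of these approximants. By Lemma \ref{null grade} combined with the case split in Definition \ref{limes aleph_1}, every value $M_{\delta_P}(A)$ either has degree strictly less than $\delta_P$ (in which case $M_P(A)=M_{\delta_P}(A)$) or equals $\mathcal{F}_{\delta_P+1}$ (in which case $M_P(A)=0$); call atoms of the latter kind \emph{flicker atoms}. Using Theorem \ref{Eigenschaften Approxi} part 1 to control the values $M_\gamma(A)$ for $\gamma<\beta$, one finds that $\bigsqcup_{\gamma<\beta}M_\gamma$ coincides with $M_{\delta_P}$ on non-flicker atoms and equals $\mathcal{F}_\beta$ on flicker atoms for every $\beta\geq\delta_P$. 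Feeding this into Definition \ref{limes aleph_1} and invoking Lemma \ref{null grade} to kill the $\mathcal{T}_\beta$ and $\mathcal{F}_\beta$ branches, one concludes $M_\beta(A)=M_{\delta_P}(A)$ on non-flicker atoms and $M_\beta(A)=\mathcal{F}_{\beta+1}$ on flicker atoms. In particular $M_\beta\sqsubseteq_{\beta+1}M_P$, because both interpretations agree through degree $\beta$ while $M_P\Vert\mathcal{F}_{\beta+1}=\emptyset$ and $M_\beta\Vert\mathcal{T}_{\beta+1}=\emptyset$.

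Applying Theorem \ref{Fortsetzungssatz1} with $\alpha:=\beta+1$ to the pair $M_\beta\sqsubseteq_{\beta+1}M_P$ gives the workhorse of the whole argument: for every formula $\phi$, every assignment $h$, and every value $w$ with $\textup{deg}(w)\leq\beta$, one has $\llbracket\phi\rrbracket^{M_\beta}_h=w$ iff $\llbracket\phi\rrbracket^{M_P}_h=w$.

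With this tool the pointwise check splits into two cases. If $M_P(A)$ has degree $\beta_0<\delta_P$, then $M_P(A)=M_{\delta_P}(A)=T_P(M_{\delta_P})(A)$ by Theorem \ref{Eigenschaften Approxi} part 4. Transferring each value $\llbracket\phi\rrbracket^{M_{\delta_P}}$ appearing in the supremum defining $T_P(M_{\delta_P})(A)$ over to $\llbracket\phi\rrbracket^{M_P}$ via the workhorse at $\beta:=\delta_P$, and using Lemma \ref{sup inf existenz annahme} to realize the supremum if it is a true value, yields $T_P(M_P)(A)=M_P(A)$.

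The main obstacle is the remaining case $M_P(A)=0$, i.e.\ $A$ is a flicker atom; here I must show that (i) no ground rule body $\phi$ of $A$ satisfies $\llbracket\phi\rrbracket^{M_P}>0$, and that (ii) for every $\alpha\in\aleph_1$ some ground rule body satisfies $\llbracket\phi\rrbracket^{M_P}>\mathcal{F}_\alpha$, which together force the supremum to equal $0$. Both go by contradiction. For (i), assume $\llbracket\phi\rrbracket^{M_P}=\mathcal{T}_\gamma$ and set $\beta:=\textup{max}(\gamma,\delta_P)$; the workhorse transfers this to $\llbracket\phi\rrbracket^{M_\beta}=\mathcal{T}_\gamma$, yielding $T_P(M_\beta)(A)\geq\mathcal{T}_\gamma$, and $M_\beta=_\beta T_P(M_\beta)$ (Theorem \ref{Eigenschaften Approxi} part 4) pins $M_\beta(A)$ to be some $\mathcal{T}_{\gamma'}$ with $\gamma'\leq\gamma\leq\beta$, contradicting $M_\beta(A)=\mathcal{F}_{\beta+1}$. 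For (ii), assume all $\llbracket\phi\rrbracket^{M_P}\leq\mathcal{F}_\alpha$ and set $\beta:=\textup{max}(\alpha,\delta_P)$; every such value has degree $\leq\alpha\leq\beta$, so the workhorse transfers to $\llbracket\phi\rrbracket^{M_\beta}\leq\mathcal{F}_\alpha$, hence $T_P(M_\beta)(A)\leq\mathcal{F}_\alpha$, and $M_\beta=_\beta T_P(M_\beta)$ places $M_\beta(A)$ at degree $\leq\alpha\leq\beta$, once more contradicting $M_\beta(A)=\mathcal{F}_{\beta+1}$.
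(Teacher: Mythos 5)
Your argument is correct in substance, and it actually supplies more than the paper does: the paper's ``proof'' of this proposition is a bare citation of Theorem 7.1 of Rondogiannis--Wadge, so what you have written is precisely the adaptation to formula-based programs that the paper leaves implicit. Your structural analysis of the approximants is right: for $\beta\geq\delta_P$ one gets $M_\beta(A)=M_{\delta_P}(A)$ when $\textup{deg}(M_{\delta_P}(A))<\delta_P$ and $M_\beta(A)=\mathcal{F}_{\beta+1}$ on the flicker atoms, whence $M_\beta\sqsubseteq_{\beta+1}M_P$, and part 3 of Theorem \ref{Fortsetzungssatz1} then gives the two-way transfer of all values of degree $\leq\beta$. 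The flicker-atom case is handled cleanly: the two contradictions (no true body value; no uniform false bound $\mathcal{F}_\alpha$), each obtained by pushing the value into $M_\beta$ for $\beta$ large and playing it off against $M_\beta=_\beta T_P(M_\beta)$ from part 4 of Theorem \ref{Eigenschaften Approxi}, do force the supremum to be $0$; the freedom to choose $\beta$ arbitrarily large is exactly what makes this work.

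One step is stated too loosely. In the non-flicker case you propose to ``transfer each value $\llbracket\phi\rrbracket^{M_{\delta_P}}$'' via the workhorse at $\beta=\delta_P$, but an individual body value may have degree exceeding $\delta_P$ (for instance $\mathcal{T}_{\delta_P+2}$, obtained by negating an atom of value $\mathcal{F}_{\delta_P+1}$), and such values are not covered by part 3 of Theorem \ref{Fortsetzungssatz1}. The repair is the same two-sided move you already make in the flicker case: if the supremum $M_{\delta_P}(A)$ is $\mathcal{F}_{\beta_0}$ with $\beta_0<\delta_P$, then every body value is $\leq\mathcal{F}_{\beta_0}$, hence of degree $\leq\beta_0$, and all of them do transfer; if it is $\mathcal{T}_{\beta_0}$, transfer the body realizing the supremum (Lemma \ref{sup inf existenz annahme}) to get $T_P(M_P)(A)\geq\mathcal{T}_{\beta_0}$, and rule out any body with $\llbracket\phi\rrbracket^{M_P}>\mathcal{T}_{\beta_0}$ by transferring that value back to $M_{\delta_P}$, contradicting the supremum there. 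With that sentence added, the proof is complete.
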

\begin{proof} See Theorem 7.1 in \cite{RondogiannisWadge}.\qed
\end{proof}
\begin{theorem} Let $P$ be a program. The interpretation $M_P$ is a model of $P$.
\end{theorem}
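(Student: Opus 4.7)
My plan is to reduce the claim to the fixed-point property $T_P(M_P)=M_P$ that is already granted by Proposition \ref{M_p fixpunkt}. Given a rule $A\leftarrow\phi\in P$ and an arbitrary variable assignment $h$, the task is to verify $\llbracket A\rrbracket^{M_P}_h\geq\llbracket\phi\rrbracket^{M_P}_h$, which by Definition \ref{defmodel} is what it means for $M_P$ to satisfy the rule.

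First I would build a ground instance of $A\leftarrow\phi$ tailored to $h$. Defining the substitution $\sigma$ by $\sigma(x):=h(x)$ for every variable $x$ that occurs free in the rule (and $\sigma(x):=x$ otherwise), the fact that $h$ takes values in $H_U$ guarantees that $A\sigma$ is a ground atom, hence $A\sigma\in H_B$, and that every variable still appearing in $\phi\sigma$ lies in the scope of a quantifier. Consequently $A\sigma\leftarrow\phi\sigma\in P_G$.

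Next I would invoke the standard substitution lemma $\llbracket\psi\rrbracket^{I}_h=\llbracket\psi\sigma\rrbracket^{I}_h$, which when $\psi\sigma$ is closed simplifies to $\llbracket\psi\sigma\rrbracket^{I}$. The proof is a routine structural induction on $\psi$, entirely analogous to (and simpler than) the case analysis already performed for Theorem \ref{Fortsetzungssatz1}; the only real bookkeeping is to apply $\sigma$ in the standard capture-avoiding way, which is automatic because $\sigma$ touches only free variables. Applying the lemma to both $A$ and $\phi$ gives $\llbracket A\rrbracket^{M_P}_h=\llbracket A\sigma\rrbracket^{M_P}$ and $\llbracket\phi\rrbracket^{M_P}_h=\llbracket\phi\sigma\rrbracket^{M_P}$.

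To finish, I combine the fixed-point property with Definition \ref{def immediate consequence}:
$$\llbracket A\rrbracket^{M_P}_h=M_P(A\sigma)=T_P(M_P)(A\sigma)=\textup{sup}\{\llbracket\psi\rrbracket^{M_P};\,A\sigma\leftarrow\psi\in P_G\}\geq\llbracket\phi\sigma\rrbracket^{M_P}=\llbracket\phi\rrbracket^{M_P}_h,$$
which closes the argument. I expect no substantial obstacle: the heavy lifting was absorbed into Proposition \ref{M_p fixpunkt} and the construction of $M_P$ preceding it, so the remaining work is just the passage from ground instances to general instances via substitution, together with the bookkeeping that $A\sigma\leftarrow\phi\sigma$ really qualifies as a member of $P_G$ under the definition of ground instance.
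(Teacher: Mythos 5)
Your proposal is correct and follows essentially the same route as the paper, which simply defers to Theorem 7.2 of Rondogiannis and Wadge: there, too, the model property is obtained from the fixed-point identity $T_P(M_P)=M_P$ together with the observation that every instance of a rule under a variable assignment is covered by a ground instance in $P_G$. Your explicit treatment of the substitution $\sigma$ and the (routine, capture-free) substitution lemma just spells out the bookkeeping that the cited proof leaves implicit.
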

\begin{proof} See Theorem 7.2 in \cite{RondogiannisWadge}.
\qed\end{proof}
\begin{proposition} \label{prop theorem}Let $P$ be a program, $\alpha$ a countable ordinal and  $M$ an arbitrary model of $P$. Then the following holds: $$\forall\beta<\alpha \left(M_\beta=_\beta M\right)\Rightarrow M_\alpha\sqsubseteq_\alpha M$$
\end{proposition}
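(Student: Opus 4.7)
The plan is to prove by transfinite induction on $\beta\leq\aleph_1$ that $L_\beta:=T^{\beta}_{P,\alpha}(L)\sqsubseteq_\alpha M$, where $L:=\bigsqcup_{\gamma<\alpha} M_\gamma$. Since $M_\alpha=L_{\aleph_1}$ by Theorem \ref{Eigenschaften Approxi}, the case $\beta=\aleph_1$ is the desired conclusion. Limit stages---including $\aleph_1$ itself---come for free from Lemma \ref{Ketteneigenschaften} part 1, whose hypothesis $L\sqsubseteq_\alpha T_P(L)$ is provided by Theorem \ref{Eigenschaften Approxi} part 2.

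The base case $L\sqsubseteq_\alpha M$ is essentially bookkeeping on Definition \ref{Def vereinigung}. The hypothesis $M_\zeta=_\zeta M$ yields $L(A)=\mathcal{F}_\zeta\Leftrightarrow M(A)=\mathcal{F}_\zeta$ for every $\zeta<\alpha$ (and similarly for $\mathcal{T}_\zeta$), so $L=_\gamma M$ for all $\gamma<\alpha$. The inclusion $L\Vert\mathcal{T}_\alpha\subseteq M\Vert\mathcal{T}_\alpha$ is vacuous since $L$ never attains $\mathcal{T}_\alpha$, and atoms $A$ with $M(A)=\mathcal{F}_\alpha$ are excluded from every $M_\zeta\Vert\mathcal{F}_\zeta$ and $M_\zeta\Vert\mathcal{T}_\zeta$ with $\zeta<\alpha$, so they fall under the ``otherwise'' clause of Definition \ref{Def vereinigung} and receive $L(A)=\mathcal{F}_\alpha$.

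For the successor step assume $L_\beta\sqsubseteq_\alpha M$ and aim at $T_P(L_\beta)\sqsubseteq_\alpha M$. The workhorse is the chain $L\sqsubseteq_\alpha L_\beta\sqsubseteq_\alpha T_P(L_\beta)$ supplied by Lemma \ref{Ketteneigenschaften} part 2. Together with the base-case analysis of $L$, this immediately gives $T_P(L_\beta)=_\gamma M$ for every $\gamma<\alpha$. For the $\mathcal{T}_\alpha$-inclusion, if $T_P(L_\beta)(A)=\mathcal{T}_\alpha$, Lemma \ref{sup inf existenz annahme} yields a ground instance $A\leftarrow\phi_0$ with $\llbracket\phi_0\rrbracket^{L_\beta}=\mathcal{T}_\alpha$; Extension Theorem \ref{Fortsetzungssatz1} part 2 lifts this to $\llbracket\phi_0\rrbracket^M=\mathcal{T}_\alpha$, and since $M$ is a model $M(A)\geq\mathcal{T}_\alpha$, while the equality clause just established rules out $M(A)=\mathcal{T}_\zeta$ with $\zeta<\alpha$, forcing $M(A)=\mathcal{T}_\alpha$.

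The main obstacle is the remaining inclusion $M\Vert\mathcal{F}_\alpha\subseteq T_P(L_\beta)\Vert\mathcal{F}_\alpha$; this is the only place where the full strength of ``$M$ is a model'' is used. For $A$ with $M(A)=\mathcal{F}_\alpha$, every ground instance $A\leftarrow\phi$ of $P$ satisfies $\llbracket\phi\rrbracket^M\leq\mathcal{F}_\alpha$; Extension Theorem \ref{Fortsetzungssatz1} part 1 pulls this back to $\llbracket\phi\rrbracket^{L_\beta}=\llbracket\phi\rrbracket^M\leq\mathcal{F}_\alpha$, giving $T_P(L_\beta)(A)\leq\mathcal{F}_\alpha$. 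The inductive hypothesis gives $L_\beta(A)=\mathcal{F}_\alpha$, and the chain $L_\beta\sqsubseteq_\alpha T_P(L_\beta)$ then forbids $T_P(L_\beta)(A)$ from lying in $\{\mathcal{F}_\zeta:\zeta<\alpha\}$, so $T_P(L_\beta)(A)=\mathcal{F}_\alpha$, closing the induction.
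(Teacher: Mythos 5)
Your proof is correct, and its skeleton coincides with the paper's: both induct transfinitely on the iteration stages of $T^{\beta}_{P,\alpha}$ applied to $L=\bigsqcup_{\gamma<\alpha}M_\gamma$, both establish $L\sqsubseteq_\alpha M$ as the base case from Definition \ref{Def vereinigung}, and both dispatch limit stages (including $\aleph_1$) via part 1 of Lemma \ref{Ketteneigenschaften}. Where you genuinely diverge is the successor step. The paper first isolates the auxiliary fact $T_P(M)\sqsubseteq_\alpha M$ --- proved by combining $\alpha$-monotonicity with the hypothesis $M_\beta=_\beta M$, part 4 of Theorem \ref{Eigenschaften Approxi} (to get $M=_\beta T_P(M)$ for $\beta<\alpha$), and the model inequality $T_P(M)(A)\leq M(A)$ --- after which each successor step is a one-line application of Lemma \ref{alphamonotonie} and transitivity: $T^{\gamma+1}_{P,\alpha}(L)\sqsubseteq_\alpha T_P(M)\sqsubseteq_\alpha M$. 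You never prove or use that fact; instead you verify $T_P(L_\beta)\sqsubseteq_\alpha M$ directly by unfolding $T_P$ and invoking Extension Theorem \ref{Fortsetzungssatz1} on $L_\beta\sqsubseteq_\alpha M$, together with the model property of $M$ pointwise --- in effect inlining a specialized instance of the $\alpha$-monotonicity argument at each step. The paper's route is shorter and reuses Lemma \ref{alphamonotonie} as a black box; yours is more laborious at the successor stage but is self-contained there and avoids any appeal to part 4 of Theorem \ref{Eigenschaften Approxi} (you only need parts 2 and 3). Both arguments are sound; your treatment of the two nontrivial inclusions at level $\alpha$ (attainment of the supremum via Lemma \ref{sup inf existenz annahme} for the $\mathcal{T}_\alpha$ case, and the exclusion of values $\mathcal{F}_\zeta$, $\zeta<\alpha$, via the chain $L_\beta\sqsubseteq_\alpha T_P(L_\beta)$ for the $\mathcal{F}_\alpha$ case) is exactly what is needed to close the induction.
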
 
\begin{proof} We assume that $\forall\beta<\alpha \left(M_\beta=_\beta M\right)$. Definition \ref{Def vereinigung} implies that \begin{equation}\bigsqcup_{\beta<\alpha} M_{\beta}\sqsubseteq_\alpha M\;\label{(S1)}.\end{equation} Now we prove that the following holds: \begin{equation}T_P(M)\sqsubseteq_\alpha M\;\label{(S2)}\end{equation} Using Lemma \ref{alphamonotonie} and the assumption above, we get that $\forall\beta<\alpha \left(T_P(M_\beta)=_\beta T_P(M)\right)$. This the assumption above and the fourth part of Theorem \ref{Eigenschaften Approxi} imply that $\forall\beta<\alpha:M=_\beta T_P(M)$. But this, together with with the fact that $M$ is a model (i.e.,  $M(A)\geq T_P(M)(A)$ holds for all atoms $A\in H_U$), implies that (\ref{(S2)}) holds.\\We finish the proof by induction on the ordinal $\gamma\in\Omega$. Using Lemma \ref{alphamonotonie} and (\ref{(S2)}), we get that $T^{\gamma}_{P,\alpha}(\bigsqcup_{\beta<\alpha} M_{\beta})\sqsubseteq_\alpha M$ implies $T^{\gamma+1}_{P,\alpha}(\bigsqcup_{\beta<\alpha} M_{\beta})\sqsubseteq_\alpha M$.  Using the first part of Lemma \ref{Ketteneigenschaften}, we get for every limit ordinal $\gamma$ that $\forall\beta<\gamma:T^{\beta}_{P,\alpha}(\bigsqcup_{\beta<\alpha} M_{\beta})\sqsubseteq_\alpha M$ implies $T^{\gamma}_{P,\alpha}(\bigsqcup_{\beta<\alpha} M_{\beta})\sqsubseteq_\alpha M$. Then, using (\ref{(S1)}) and statement 3. of Theorem \ref{Eigenschaften Approxi}, $M_{\alpha}=T^{\aleph_1}_{P,\alpha}(\bigsqcup_{\beta<\alpha} M_{\beta})\sqsubseteq_\alpha M$ holds.\qed\end{proof}
\begin{theorem} \label{Theorem kleinstes Modell} The interpretation $M_P$ of a given program $P$ is the least of all models of $P$ (i.e., for all models $M$ of $P$ the property $M_P\sqsubseteq_\infty M$ holds).
\end{theorem}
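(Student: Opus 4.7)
The plan is to compare the given model $M$ with the chain of approximants $(M_\alpha)_{\alpha<\aleph_1}$ and use Proposition \ref{prop theorem} as the workhorse. I would split the argument into two cases according to whether $M_\alpha=_\alpha M$ holds for \emph{every} countable ordinal $\alpha$, or fails for some of them (well-foundedness of $\aleph_1$ then supplies a least such $\alpha_0$). In the first case I expect $M$ to coincide outright with $M_P$; in the second, Proposition \ref{prop theorem} applied at $\alpha_0$ yields $M_{\alpha_0}\sqsubset_{\alpha_0}M$, which I transfer to $M_P$.

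For the first case, I take an arbitrary $A\in H_B$ and examine the two branches of Definition \ref{M_P}. If $\textup{deg}(M_{\delta_P}(A))=\beta<\delta_P$, then Theorem \ref{Eigenschaften Approxi}(1) gives $M_\beta(A)=M_{\delta_P}(A)$, and the hypothesis $M_\beta=_\beta M$ returns $M(A)=M_P(A)$. Otherwise $M_P(A)=0$; to see $M(A)=0$, suppose $M(A)\in\{\mathcal{F}_\gamma,\mathcal{T}_\gamma\}$, so that $M_\gamma(A)=M(A)$ by hypothesis, Lemma \ref{null grade} forces $\gamma<\delta_P$, and Theorem \ref{Eigenschaften Approxi}(1) propagates this classical value up to $M_{\delta_P}$, contradicting the degree assumption.

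For the second case, having obtained $M_{\alpha_0}\sqsubset_{\alpha_0}M$, I transfer it to $M_P$ via a further split on whether $\alpha_0\leq\delta_P$ or $\alpha_0>\delta_P$. If $\alpha_0\leq\delta_P$, I claim $M_P=_{\alpha_0}M_{\alpha_0}$: at each level $\beta<\alpha_0$ this follows from Theorem \ref{Eigenschaften Approxi}(1) combined with the preservation clause of Definition \ref{M_P}, while at $\beta=\alpha_0$ the same clause handles $\alpha_0<\delta_P$ and Lemma \ref{null grade} handles $\alpha_0=\delta_P$ (the $\mathcal{F}_{\delta_P}$- and $\mathcal{T}_{\delta_P}$-classes are empty on both sides). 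Hence $M_P$ occupies the same $\sqsubseteq_{\alpha_0}$-position with respect to $M$ as $M_{\alpha_0}$, so $M_P\sqsubset_{\alpha_0}M$. If $\alpha_0>\delta_P$, Lemma \ref{null grade} makes $M_{\alpha_0}\Vert\mathcal{F}_{\alpha_0}=M_{\alpha_0}\Vert\mathcal{T}_{\alpha_0}=\emptyset$; the strict relation then forces $M\Vert\mathcal{F}_{\alpha_0}=\emptyset$ and $M\Vert\mathcal{T}_{\alpha_0}\neq\emptyset$. The range of $M_P$ consists only of values of degree $<\delta_P$ together with $0$, so its $\mathcal{F}_{\alpha_0}$- and $\mathcal{T}_{\alpha_0}$-classes are empty as well, and at every level $\beta<\alpha_0$ the equality $M_P=_\beta M$ holds (for $\beta<\delta_P$ via the chain $M_P=_\beta M_{\delta_P}=_\beta M_\beta=_\beta M$, for $\delta_P\leq\beta<\alpha_0$ because Lemma \ref{null grade} together with $M_\beta=_\beta M$ makes both sides empty at level $\beta$). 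Once again $M_P\sqsubset_{\alpha_0}M$.

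The real obstacle is the sub-case $\alpha_0>\delta_P$: there $M_P$ no longer tracks $M_{\alpha_0}$ literally, because the high-degree values of the latter collapse to $0$ in $M_P$. Lemma \ref{null grade} is exactly what keeps this collapse from creating any new disagreement at a level strictly below $\alpha_0$ and lets the strict inequality reappear precisely at $\alpha_0$, so that $M_P\sqsubseteq_\infty M$ follows in both main cases.
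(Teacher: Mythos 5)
Your proof is correct and follows essentially the same route as the paper's: locate the least ordinal at which level-wise agreement with $M$ fails, apply Proposition \ref{prop theorem} there, and conclude a strict $\sqsubset_{\alpha_0}$, hence $M_P\sqsubseteq_\infty M$. The only real difference is that you anchor the minimal disagreement at the approximants $M_{\alpha}$ rather than at $M_P$ itself and then explicitly verify the transfer back to $M_P$ (splitting on $\alpha_0\leq\delta_P$ versus $\alpha_0>\delta_P$, and treating the ``agreement at every level'' case separately), which amounts to proving the identity $M_P=_{\alpha}M_{\alpha}$ that the paper's proof uses without justification.
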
\begin{proof} Let $M$ be an arbitrary model of $P$. Without loss of generality, we assume that $M\neq M_P$. Then let $\alpha$ be the least ordinal such that $\forall\beta<\alpha \left(M_P=_\beta M\right)$. This implies also $\forall\beta<\alpha \left(M_\beta=_\beta M\right)$. Then, using Proposition \ref{prop theorem}, $M_{P}=_{\alpha}M_\alpha\sqsubseteq_\alpha M$. The choice of $\alpha$ implies that $M_P\neq_\alpha M$. Then we get that $M_P\sqsubset_\alpha M$ and this finally implies $M_P\sqsubseteq_\infty M$.\qed\end{proof}
\begin{corollary}\label{leastfixpoint} Let $P$ be a program. The interpretation $M_P$ is the least of all fixed points of $T_P$.
\end{corollary}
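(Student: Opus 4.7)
The plan is to deduce the corollary directly from the three results immediately preceding it, without any new construction. The key observation is that every fixed point of $T_P$ is automatically a model of $P$, so minimality among models carries over to minimality among fixed points.

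More precisely, I would first show that if $I$ is any interpretation with $T_P(I) = I$, then $I$ is a model of $P$. For an arbitrary ground instance $A\leftarrow \phi \in P_G$, the definition of $T_P$ gives $I(A) = T_P(I)(A) = \sup\{\llbracket \psi\rrbracket^I : A\leftarrow \psi \in P_G\} \geq \llbracket \phi\rrbracket^I$. Since this inequality holds for every ground instance, and the semantics of $\llbracket A\sigma\rrbracket^I$ and $\llbracket \phi\sigma\rrbracket^I$ do not depend on the assignment $h$, Definition \ref{defmodel} is satisfied (one passes from ground instances back to arbitrary assignments of the original rule in the standard way, since every variable assignment of $A\leftarrow \phi$ can be encoded by a substitution $\sigma$ together with an assignment to the variables quantified in $\phi\sigma$).

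Having established that every fixed point of $T_P$ is a model, the corollary follows immediately. By Proposition \ref{M_p fixpunkt} the interpretation $M_P$ is itself a fixed point of $T_P$, and by Theorem \ref{Theorem kleinstes Modell} it is the least model of $P$ with respect to $\sqsubseteq_\infty$. Hence for any fixed point $I$ of $T_P$, $I$ is a model of $P$, so $M_P \sqsubseteq_\infty I$. This exhibits $M_P$ as the least fixed point of $T_P$.

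I do not expect any genuine obstacle here; the only mildly delicate point is the passage from \emph{``$I(A) \geq \llbracket \phi\rrbracket^I$ for every ground instance $A\leftarrow \phi$''} to \emph{``$\llbracket A\rrbracket^I_h \geq \llbracket \phi\rrbracket^I_h$ for every assignment $h$''} in the original rule, but this is a routine consequence of the remark following the definition of ground instance (semantics of ground expressions is independent of $h$), combined with the fact that $P_G$ contains all ground instances of the rules of $P$.\qed
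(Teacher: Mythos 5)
Your proposal is correct and follows exactly the paper's own (very terse) argument: the paper likewise reduces the corollary to the observation that every fixed point of $T_P$ is a model of $P$, combined with Proposition \ref{M_p fixpunkt} and Theorem \ref{Theorem kleinstes Modell}. You merely spell out the ``easy to prove'' step that the paper leaves to the reader, and you do so correctly.
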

\begin{proof} It is easy to prove that every fixed point of $T_P$ is also a model of $P$. This together with Proposition \ref{M_p fixpunkt} and Theorem \ref{Theorem kleinstes Modell} imply Corollary \ref{leastfixpoint}.\qed\end{proof}
\begin{proposition} There is a countable ordinal $\delta\in\aleph_1$ such that for all programs $P$ of  an arbitrary language  $\mathcal{L}_{n,m,l,(s_i),(r_i)}$ the property $\delta_P<\delta$ holds. Let  $\delta_{\textup{max}}$ be the least ordinal such that the above property holds.
\end{proposition}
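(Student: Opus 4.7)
My plan is to reduce this to a counting argument and invoke the regularity of $\aleph_1$ (Theorem \ref{aleph1regul�r}), in complete analogy with the argument used in Lemma \ref{null grade}. The key observation is that, although each individual program $P$ can have a depth $\delta_P$ that is an arbitrarily large countable ordinal, the collection of all programs in question is countable, so the image of $P \mapsto \delta_P$ must be a countable subset of $\aleph_1$ and therefore bounded.

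First I would enumerate the relevant data. A language $\mathcal{L}_{n,m,l,(s_i),(r_i)}$ is determined by finitely many natural number parameters, so there are only countably many such languages. For each fixed language, the alphabet from Definition \ref{Sprache Def} is countable (finitely many non-logical symbols plus a countable set of variables), so the set of finite strings over this alphabet is countable. In particular the set $\textup{Form}$ of formulas is countable, and hence so is the set of rules $A \leftarrow \phi$ with $A \in H_B$. Since a program is by definition a \emph{finite} subset of this countable set of rules, the collection of programs over a fixed language is countable. Taking the union over all countably many languages yields a countable collection $\mathcal{P}$ containing all programs considered in the proposition.

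Next I would consider the set of depths $D := \{\delta_P \in \aleph_1 : P \in \mathcal{P}\}$. Since $\mathcal{P}$ is countable, $D$ is a countable subset of $\aleph_1$. By the regularity of $\aleph_1$ (Theorem \ref{aleph1regul�r}), $D$ cannot be cofinal in $\aleph_1$, so $\textup{sup}\,D < \aleph_1$. Setting $\delta := (\textup{sup}\,D) + 1$ gives a countable ordinal with $\delta_P < \delta$ for every $P \in \mathcal{P}$, and $\delta_{\textup{max}}$ is then well-defined as the least such ordinal.

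I do not expect any serious obstacle here: the entire argument is the same bounded-countable-subset-of-$\aleph_1$ idea already used to define $\delta_P$ in Lemma \ref{null grade} and Definition \ref{delta_p}, simply applied one level higher to the set of programs rather than to the Herbrand base of a single program. The only delicate point worth stating explicitly is the countability of $\mathcal{P}$ across all choices of the parameters $n,m,l,(s_i),(r_i)$; once this is noted, the regularity of $\aleph_1$ finishes the proof immediately.
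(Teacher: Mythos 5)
Your proposal is correct and follows essentially the same route as the paper: count the signatures, count the programs over each signature (using that a program is a finite set of rules), conclude that the set of all programs is countable, and then use the regularity of $\aleph_1$ (Theorem \ref{aleph1regul�r}) to see that the image of $P\mapsto\delta_P$ is not cofinal and hence bounded. Your write-up merely makes the countability of $\textup{Form}$ and of the rule set more explicit than the paper does.
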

\begin{proof} We know that the set of all signatures  $\langle n,m,l,(s_i)_{1\leq i\leq n},(r_i)_{1\leq i\leq m}\rangle$  is countable. Additionally, we know that the set of all programs of a fixed signature is also countable (Remember that a program is a finite set of rules.). This implies that the set of all programs is countable. Then we get that the image of the function from the set of all programs to $\aleph_1$ given by $P\mapsto\delta_P$ is countable. Then, using Theorem \ref{aleph1regulär}, the image of $\delta_{(\cdot)}$ is not cofinal in $\aleph_1$ (i.e., there exists an ordinal $\delta\in\aleph_1$ such that for all programs $P$ the property $\delta_P<\delta$ holds).\qed\end{proof}
\begin{proposition} The ordinal $\delta_{\textup{max}}$ is at least $\omega^\omega$.
\end{proposition}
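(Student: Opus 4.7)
The plan is to exhibit, for each natural number $k \geq 1$, a formula-based program $P_k$ with $\delta_{P_k} \geq \omega^k$. Since $\omega^\omega = \sup_{k<\omega}\omega^k$ and $\delta_{\textup{max}}$ is by definition strictly greater than $\delta_{P_k}$ for every program $P_k$, this will force $\delta_{\textup{max}} \geq \omega^\omega$.

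The base case $k=1$ is essentially the example of Section 2: the program with rules $P(c)\leftarrow$, $R(x)\leftarrow\neg P(x)$, $P(Sx)\leftarrow\neg R(x)$, and $Q\leftarrow\forall x(P(x))$ has least model assigning $\mathcal{T}_{2n}$ to $P(S^n c)$ and $\mathcal{T}_\omega$ to $Q$, so $M_\omega\Vert\mathcal{T}_\omega\neq\emptyset$ and $\delta_{P_1}\geq\omega+1$.

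For the inductive step I would iterate this pattern under nested universal quantification. Using a $k$-ary predicate $A$ together with the unary function $S$, I would arrange (by a block of rules imitating the example in each coordinate separately, together with rules of the form $A(Sx_1,c,\ldots,c)\leftarrow\forall y_2\cdots\forall y_k\,A(x_1,y_2,\ldots,y_k)$ that manufacture the limit jumps across coordinates) that in the least model the atom $A(S^{n_1}c,\ldots,S^{n_k}c)$ acquires a true value whose degree is $\omega^{k-1}n_1+\omega^{k-2}n_2+\cdots+2n_k$. An outermost rule $Q_k\leftarrow\forall x_1\cdots\forall x_k\,A(x_1,\ldots,x_k)$ then forces $Q_k$ to receive value $\mathcal{T}_{\omega^k}$, because the set $\{\omega^{k-1}n_1+\cdots+2n_k\,;\,n_i\in\mathbb{N}\}$ is cofinal in $\omega^k$; hence $M_{\omega^k}\Vert\mathcal{T}_{\omega^k}\neq\emptyset$ and $\delta_{P_k}\geq\omega^k+1$.

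The main obstacle is verifying that the approximants $M_\alpha$ of Definition \ref{defapprox} actually stratify $P_k$ as intended, i.e.\ that each relevant atom first acquires its predicted truth value precisely at the stage whose index equals the degree of that value. This is a long but mechanical transfinite induction on $\alpha$; the decisive tools are Theorem \ref{Fortsetzungssatz2}, which governs the behaviour of quantified bodies at the limit construction $T^{\aleph_1}_{P,\alpha}$, and Lemma \ref{sup inf existenz annahme}, which guarantees that the relevant suprema and infima of true (resp.\ false) values are attained. Once this verification is carried out for each $k$, the proposition follows immediately, and no regularity argument is needed in this direction because $\omega^\omega$ is itself a countable ordinal.
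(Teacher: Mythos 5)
Your proposal is correct and matches the paper's proof in all essentials: the paper likewise exhibits, for each $n\geq 1$, a program $P_n$ built on an $n$-ary predicate, with a double-negation rule advancing the degree by $2$ in the innermost coordinate, coordinate-collapsing quantifier rules producing the limit jumps, and a final quantified rule giving an atom a value of degree $\omega^n$, whence $\delta_{P_n}>\omega^n$ and $\delta_{\textup{max}}\geq\omega^\omega$. The only difference is that you work dually with $\forall$ and true values (building on the Section 2 example) where the paper uses $\exists$ and false values starting from the empty interpretation, and, like you, the paper leaves the stratification verification of the least model to the reader.
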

\begin{proof} Let $n>0$ be a natural number. We consider the program $P_n$ consisting of the following rules (where $G,H$ are predicate symbols, $f$ is a function symbol and c is a constant):
\\\\ $G(x_1,...,x_{n-1},f(x_n))\leftarrow\neg\neg G(x_1,...,x_{n-1},x_n)$
\\For all $k$ provided that $1\leq k\leq n-1$ the rule:
\\ $G(x_1,...,x_{k-1},f(x_k),c,...,c)\leftarrow \exists x_{k+1},...,x_{n}G(x_1,...,x_{k-1},x_k,x_{k+1},...,x_n)$
\\$H\leftarrow\exists x_1,...,x_nG(x_1,...,x_n)$
\\\\This implies that $M_{P_{n}}$ maps $G\left(f^{k_1}(c),...,f^{k_n}(c)\right)$ to $\mathcal{F}_{\sum_{m=1}^{n-1}k_m\omega^{n-m}+k_n\cdot2}$ and $H$ to $\mathcal{F}_{\omega^n}$.\qed
\end{proof}
At the end of this paper we will prove that the 3-valued  interpretation $M_{P,3}$  that results from the infinite-valued model $M_P$  by collapsing all true values to \textit{True} (abbr. $\mathcal{T}$) and all false values to \textit{False} (abbr. $\mathcal{F}$) is also a model in the sense of the following semantics: 
\begin{definition} The semantics of formulas with respect to 3-valued interpretations is defined as in Definition \ref{wfunktion} except that $\llbracket\top \rrbracket^I_h=\mathcal{T}$,  $\llbracket \bot\rrbracket^I_h=\mathcal{F}$ and $$\llbracket \neg(\phi)\rrbracket^I_h=\begin{cases} 
\mathcal{T}, &\text{if }\llbracket\phi\rrbracket^I_h=\mathcal{F}\\
\mathcal{F}, &\text{if }\llbracket\phi\rrbracket^I_h=\mathcal{T}\\
0, &\text{otherwise} \end{cases}.$$  
The Definition \ref{defmodel} is also suitable in the case of 3-valued interpretations. The truth values are ordered as follows: $\mathcal{F}<0<\mathcal{T}$\end{definition}
\begin{proposition} \label{collaps}Let $P$ be a program and let  $\textup{collapse}(\cdot)$ be the function from $W$ to  the set $\{\mathcal{F},0,\mathcal{T}\}$ given by $\mathcal{F}_i\mapsto \mathcal{F}$, $0\mapsto0$ and $\mathcal{T}_i\mapsto \mathcal{T}$. Moreover, let $I$ be an arbitrary interpretation then $\textup{collapse}(I)$ is the 3-valued interpretation given by $\textup{collapse}(I)(A):=\textup{collapse}(I(A))$ for all $A\in H_B$. Then for all formulas $\phi$ and all assignments $h$ the following holds: $$\textup{collapse}(\llbracket \phi\rrbracket^{I}_h)=\llbracket \phi\rrbracket^{\textup{collapse}(I)}_h$$
\end{proposition}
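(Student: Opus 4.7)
The plan is a straightforward structural induction on $\phi$, with the only subtlety appearing in the quantifier cases, where I will have to invoke the regularity of $\aleph_1$ to show that $\textup{collapse}$ commutes with countable infima and suprema in $W$.

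The base cases are immediate from the definitions. For $\phi=\top$ we have $\textup{collapse}(\mathcal{T}_0)=\mathcal{T}=\llbracket\top\rrbracket_h^{\textup{collapse}(I)}$, and symmetrically for $\bot$. For $\phi=P_k(t_1,\ldots,t_{s_k})$ both sides equal $\textup{collapse}(I)(P_k(\llbracket t_1\rrbracket_h,\ldots,\llbracket t_{s_k}\rrbracket_h))$, since the semantics of terms does not refer to $I$. For $\phi=\neg\psi$, I split on the form of $\llbracket\psi\rrbracket_h^I$: if it is $\mathcal{F}_\alpha$ then the left-hand side is $\textup{collapse}(\mathcal{T}_{\alpha+1})=\mathcal{T}$, while by the induction hypothesis $\llbracket\psi\rrbracket_h^{\textup{collapse}(I)}=\mathcal{F}$, so the right-hand side is also $\mathcal{T}$; the cases $\mathcal{T}_\alpha$ and $0$ are handled the same way.

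For the binary connectives I use the fact that $\textup{collapse}:W\to\{\mathcal{F},0,\mathcal{T}\}$ is monotone (with respect to the orderings $\leq$ given in the paper), so it commutes with $\min$ and $\max$ of two arguments; combined with the induction hypothesis this gives the claim for $\wedge$ and $\vee$. The main work is the case $\phi=\forall v(\psi)$ (and dually $\phi=\exists v(\psi)$), which reduces, via the induction hypothesis applied pointwise, to proving the auxiliary claim
\[
\textup{collapse}\bigl(\textup{inf}\,S\bigr)=\textup{inf}\,\textup{collapse}(S)\quad\text{for every countable }S\subseteq W,
\]
and its dual for $\textup{sup}$. The set $S=\{\llbracket\psi\rrbracket_h^I:\,u\in H_U\}$ that shows up in the quantifier case is countable because $H_U$ is countable.

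The hard part is precisely this commutation, and it is where the whole $\aleph_1$-machinery pays off. I will split $S$ into three subcases. If $S$ meets $\{\mathcal{F}_\beta:\beta\in\aleph_1\}$, then by Lemma~\ref{sup inf existenz annahme} the infimum lies in $S$ and is some $\mathcal{F}_\gamma$, which collapses to $\mathcal{F}$, matching the 3-valued infimum (which also sees an $\mathcal{F}$). If $S\subseteq\{0\}\cup\{\mathcal{T}_\beta:\beta\in\aleph_1\}$ and $0\in S$, then both sides evaluate to $0$. The only delicate case is $S\subseteq\{\mathcal{T}_\beta:\beta\in\aleph_1\}$: a priori $\textup{inf}\,S$ could drop to $0$, and then the 3-valued side would give $\mathcal{T}$, breaking the equation. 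But here regularity of $\aleph_1$ (Theorem~\ref{aleph1regul�r}) saves us: the countable set of indices $\{\beta:\mathcal{T}_\beta\in S\}$ is not cofinal in $\aleph_1$, so it is bounded by some $\beta^{*}<\aleph_1$, whence $\textup{inf}\,S=\mathcal{T}_{\beta^{*}}$ (or $\mathcal{T}_0$ if $S=\emptyset$) and both sides collapse to $\mathcal{T}$. The argument for $\textup{sup}$ is symmetric, using $\mathcal{F}$ in place of $\mathcal{T}$. Feeding this back into the induction closes the cases $\forall v(\psi)$ and $\exists v(\psi)$ and completes the proof.
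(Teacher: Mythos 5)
Your proposal is correct and follows essentially the same route as the paper: structural induction on $\phi$, with the quantifier cases resolved by Lemma~\ref{sup inf existenz annahme} (a false infimum / true supremum is attained) and the regularity of $\aleph_1$ (Theorem~\ref{aleph1regul�r}) to exclude the possibility that a countable set of true values has infimum $0$. The only cosmetic difference is that you case-split on the composition of the value set $S$ while the paper case-splits on the value of $\inf S$; the content is the same.
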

\begin{proof} One can prove  this by induction on the structure of $\phi$ together with Theorem \ref{aleph1regulär}. Due to the page constraints, we present only the most interesting case.\\Let us assume that $\phi=\forall x(\psi)$ and  that the proposition holds for $\psi$. Obviously, the equation $\textup{collapse}(\llbracket \phi\rrbracket^{I}_h)=\text{collapse}(\text{inf}\{\llbracket \psi\rrbracket^{I}_{h[x\mapsto u]};\;u\in H_U\})$ holds. Now we have to consider the following three possible cases, where $I_c:=\text{collapse}(I)\,$:
\\\textit{Case 1:} $\text{inf}\{\llbracket \psi\rrbracket^{I}_{h[x\mapsto u]};\;u\in H_U\}=\mathcal{F}_\alpha$. Then, using Lemma \ref{sup inf existenz annahme}, there must be an $u'\in H_U$ such that $\llbracket \psi\rrbracket^{I}_{h[x\mapsto u']}=\mathcal{F}_\alpha$. This implies, using the assumption, that  $\llbracket \psi\rrbracket^{I_c}_{h[x\mapsto u']}=\mathcal{F}$ and hence $\llbracket\phi\rrbracket_h^{I_c}=\llbracket \forall x(\psi)\rrbracket^{I_c}_{h}=\text{inf}\{\llbracket \psi\rrbracket^{I_c}_{h[x\mapsto u]};\;u\in H_U\}=\mathcal{F}=\textup{collapse}(\mathcal{F}_{\alpha})=\textup{collapse}(\llbracket \phi\rrbracket^{I}_h)$ holds.
\\\textit{Case 2:} $\text{inf}\{\llbracket \psi\rrbracket^{I}_{h[x\mapsto u]};\;u\in H_U\}=\mathcal{T}_\alpha$. Then, using the assumption, we get that $\llbracket \psi\rrbracket^{I_c}_{h[x\mapsto u]}=\mathcal{T}$ for all $u\in H_U$ and hence $\llbracket\phi\rrbracket_h^{I_c}=\llbracket \forall x(\psi)\rrbracket^{I_c}_{h}=\text{inf}\{\llbracket \psi\rrbracket^{I_c}_{h[x\mapsto u]};\;u\in H_U\}=\mathcal{T}=\textup{collapse}(\mathcal{T}_{\alpha})=\textup{collapse}(\llbracket \phi\rrbracket^{I}_h)$ holds.
\\\textit{Case 3:} $\text{inf}\{\llbracket \psi\rrbracket^{I}_{h[x\mapsto u]};\;u\in H_U\}=0$. We know that $H_U$ is a countable set and hence, using Theorem \ref{aleph1regulär}, we get that there must be an $u'\in H_U$ such that $\llbracket \psi\rrbracket^{I}_{h[x\mapsto u']}=0$. The assumption implies that $\llbracket \psi\rrbracket^{I_c}_{h[x\mapsto u']}=0$ and  $0\leq\llbracket \psi\rrbracket^{I_c}_{h[x\mapsto u]}$ for all $u\in H_U$.  Hence we get that $\llbracket\phi\rrbracket_h^{I_c}=\llbracket \forall x(\psi)\rrbracket^{I_c}_{h}=\text{inf}\{\llbracket \psi\rrbracket^{I_c}_{h[x\mapsto u]};\;u\in H_U\}=0=\textup{collapse}(0)=\textup{collapse}(\llbracket \phi\rrbracket^{I}_h)$  holds.\qed
\end{proof}
\begin{proposition} Let $P$ be a formula-based logic program. Then the 3-valued interpretation $M_{P,3}$ is a 3-valued model of $P$.
\end{proposition}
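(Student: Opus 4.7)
The plan is to reduce this to two results already in hand: the earlier theorem (cited from Rondogiannis--Wadge) that $M_P$ is an infinite-valued model of $P$, and Proposition \ref{collaps} stating that $\text{collapse}$ commutes with the formula semantics. By the definition of $M_{P,3}$, we have $M_{P,3}=\text{collapse}(M_P)$, so for every formula $\phi$ and every variable assignment $h$, Proposition \ref{collaps} gives $\llbracket\phi\rrbracket^{M_{P,3}}_h=\text{collapse}(\llbracket\phi\rrbracket^{M_P}_h)$.

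Fix any rule $A\leftarrow\phi$ in $P$ and any assignment $h$. Since $M_P$ is a model of $P$ (Theorem 7.2 in the cited paper, applied here), we know $\llbracket A\rrbracket^{M_P}_h\geq\llbracket\phi\rrbracket^{M_P}_h$ in the infinite-valued order on $W$. Applying Proposition \ref{collaps} to both sides, the required inequality $\llbracket A\rrbracket^{M_{P,3}}_h\geq\llbracket\phi\rrbracket^{M_{P,3}}_h$ in the 3-valued order will follow as soon as I verify that $\text{collapse}:W\to\{\mathcal{F},0,\mathcal{T}\}$ is monotone.

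Monotonicity of $\text{collapse}$ is a short case check from the chain $\mathcal{F}_0<\cdots<\mathcal{F}_\alpha<\cdots<0<\cdots<\mathcal{T}_\alpha<\cdots<\mathcal{T}_0$: every false value is sent to $\mathcal{F}$, $0$ to $0$, every true value to $\mathcal{T}$, and $\mathcal{F}<0<\mathcal{T}$ in the target. If $x\leq y$ in $W$, then either $x$ is a false value (so $\text{collapse}(x)=\mathcal{F}$ is least), or $x=0$ in which case $y\in\{0\}\cup\{\mathcal{T}_\beta\}$ and $\text{collapse}(y)\in\{0,\mathcal{T}\}\geq 0$, or $x$ is a true value, forcing $y$ true as well so both collapse to $\mathcal{T}$. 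In each case $\text{collapse}(x)\leq\text{collapse}(y)$.

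Combining the three observations yields $\llbracket A\rrbracket^{M_{P,3}}_h=\text{collapse}(\llbracket A\rrbracket^{M_P}_h)\geq\text{collapse}(\llbracket\phi\rrbracket^{M_P}_h)=\llbracket\phi\rrbracket^{M_{P,3}}_h$, which is exactly the condition in Definition \ref{defmodel} applied to the 3-valued semantics. Since $A\leftarrow\phi$ and $h$ were arbitrary, $M_{P,3}$ satisfies every rule of $P$, hence is a 3-valued model of $P$. There is no real obstacle; the substantive work has already been done in proving Proposition \ref{collaps} (where the regularity of $\aleph_1$ was needed to handle the $\forall$/$\exists$ cases) and in proving that $M_P$ is an infinite-valued model, so the present statement is essentially a corollary of those two results plus the trivial monotonicity of $\text{collapse}$.
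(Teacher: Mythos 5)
Your proposal is correct and follows essentially the same route as the paper: the paper's one-line proof chains $\llbracket \phi\rrbracket^{M_{P,3}}_h=\text{collapse}(\llbracket \phi\rrbracket^{M_{P}}_h)\leq\text{collapse}(\llbracket A\rrbracket^{M_{P}}_h)=\llbracket A\rrbracket^{M_{P,3}}_h$ using Proposition \ref{collaps} and the fact that $M_P$ is a model, leaving the monotonicity of $\text{collapse}$ implicit where you spell it out. No substantive difference.
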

\begin{proof} We assume that $A\leftarrow \phi$ is a rule of $P$. Then, for every assignment $h$, we get that $\llbracket \phi\rrbracket^{M_{P,3}}_h\stackrel{\text{Proposition } \ref{collaps}}{=}\text{collapse}(\llbracket \phi\rrbracket^{M_{P}}_h)\stackrel{\text{Theorem } \ref{Theorem kleinstes Modell}}{\leq}\text{collapse}(\llbracket A\rrbracket^{M_{P}}_h)=\llbracket A\rrbracket^{M_{P,3}}_h$ holds.\qed 
\end{proof}
\begin{remark} The 3-valued model $M_{P,3}$ is not a minimal model in general. Consider the logic program $P=\{P_1\leftarrow\neg\neg P_1\}$. Then the infinite-valued model $M_P$ maps $P_1$ to $0$ and this implies $M_{P,3}(P_1)=0$. But the (2-valued) interpretation $\{\left\langle P_1,\mathcal{F}\right\rangle\}$ is a model of $P$ and it is less than $M_{P,3}$. The ordering on the 3-valued interpretations is introduced in \cite{Przymusinski} page 5. 
\end{remark}
However, Rondogiannis and Wadge prove in \cite{RondogiannisWadge}
that the 3-valued model $M_{P,3}$ of a given normal program $P$ is equal to the 3-valued well-founded model of $P$ and hence, using a result of Przymusinski (Theorem 3.1 of \cite{Przymusinski}), it is a minimal model of $P$. In the context of formula-based logic programs we can prove   Theorem \ref{Minimal Model}. Before we start with the proof we have to consider the following definition and a lemma that plays an important role in the proof of the theorem. 
\begin{definition}\textup{The \textit{negation degree} $\text{deg}_\neg(\phi)$ of a formula $\phi$ is defined recursively on the structure of $\phi$ as follows:
\begin{enumerate}
\item 
If $\phi$ is an atom, then deg$_\neg(\phi):=0$.
\item If $\phi=\psi_1\circ \psi_2$, then deg$_\neg(\phi):=\text{max}\{\text{deg}_\neg(\psi_1), \text{deg}_\neg(\psi_2)\}$. ($\circ\in\{\vee,\wedge\}$)
\item If $\phi=\boxdot x(\psi)$, then deg$_\neg(\phi):=\text{deg}_\neg(\psi)$. ($\boxdot\in\{\exists,\forall\}$)
\end{enumerate}
}\end{definition}
\begin{lemma}\label{LemmaMinimalModel} Let $I$ be an interpretation and $\gamma,\zeta\in\aleph_1$ such that for all $A\in H_B$ the following holds:$$I(A)\in[\mathcal{F}_0,\mathcal{F}_\gamma]\cup\{0\}\cup[\mathcal{T}_\zeta,\mathcal{T}_0]$$ Then for all formulas $\phi$ such that $\textup{deg}_\neg(\phi)\leq1$ and all variable assignments $h$ the following holds:$$\llbracket \phi\rrbracket^I_h\in\begin{cases}[\mathcal{F}_0,\mathcal{F}_\gamma]\cup\{0\}\cup[\mathcal{T}_\zeta,\mathcal{T}_0], & \text{if }\textup{deg}_\neg(\phi)=0 \\
[\mathcal{F}_0,\mathcal{F}_{\textup{max}\{\gamma,\zeta+1\}}]\cup\{0\}\cup[\mathcal{T}_{\textup{max}\{\gamma+1,\zeta\}},\mathcal{T}_0], & \text{otherwise} \\
\end{cases}$$
\end{lemma}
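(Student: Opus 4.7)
The plan is to proceed by structural induction on $\phi$. Write $S_0 := [\mathcal{F}_0, \mathcal{F}_\gamma] \cup \{0\} \cup [\mathcal{T}_\zeta, \mathcal{T}_0]$ and $S_1 := [\mathcal{F}_0, \mathcal{F}_{\max\{\gamma, \zeta+1\}}] \cup \{0\} \cup [\mathcal{T}_{\max\{\gamma+1, \zeta\}}, \mathcal{T}_0]$, and note that $S_0 \subseteq S_1$. What I want to prove is that $\llbracket\phi\rrbracket^I_h \in S_0$ when $\textup{deg}_\neg(\phi) = 0$ and $\llbracket\phi\rrbracket^I_h \in S_1$ whenever $\textup{deg}_\neg(\phi) \leq 1$; the lemma then follows.

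A preparatory observation is that both $S_0$ and $S_1$ are closed under $\min$, $\max$, $\sup$ and $\inf$ taken over arbitrary subsets of themselves. For finite arguments this is routine case analysis using the total order of $W$. For infinite arguments the key point is that a family of values $\mathcal{F}_{\alpha_i}$ with all $\alpha_i \leq \gamma$ has its supremum equal to $\mathcal{F}_{\alpha^\ast}$ with $\alpha^\ast = \sup_i \alpha_i \leq \gamma$, so the supremum stays inside the false interval; symmetrically for infima of $\mathcal{T}_\beta$-values, and the mixed cases reduce via the ordering $\mathcal{F}_\gamma < 0 < \mathcal{T}_\zeta$. Invoking Lemma~\ref{sup inf existenz annahme} where needed for existence, this closure is straightforward but is the one place requiring a little care.

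The base cases are immediate: $\llbracket\top\rrbracket^I_h = \mathcal{T}_0 \in S_0$, $\llbracket\bot\rrbracket^I_h = \mathcal{F}_0 \in S_0$, and for $\phi = P_k(t_1,\ldots,t_{s_k})$ one has $\llbracket\phi\rrbracket^I_h = I(P_k(\llbracket t_1\rrbracket_h,\ldots)) \in S_0$ directly from the hypothesis on $I$. For $\phi = \psi_1 \wedge \psi_2$, $\phi = \psi_1 \vee \psi_2$, $\phi = \forall x(\psi)$ and $\phi = \exists x(\psi)$ the negation degree of $\phi$ equals the maximum of (resp.\ the unique) subformula's degree. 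If $\textup{deg}_\neg(\phi) = 0$ then every subformula has degree $0$ and its value lies in $S_0$ by the inductive hypothesis, so closure of $S_0$ places $\llbracket\phi\rrbracket^I_h$ in $S_0$. If $\textup{deg}_\neg(\phi) = 1$, every subformula's value lies in $S_0 \subseteq S_1$ or in $S_1$, and closure of $S_1$ yields the conclusion.

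The substantive case is $\phi = \neg \psi$ with $\textup{deg}_\neg(\phi) \leq 1$, which forces $\textup{deg}_\neg(\psi) = 0$. By induction $\llbracket\psi\rrbracket^I_h \in S_0$, and a direct case distinction using Definition~\ref{wfunktion} gives: $\mathcal{F}_\alpha \mapsto \mathcal{T}_{\alpha+1}$ with $\alpha+1 \leq \gamma+1 \leq \max\{\gamma+1, \zeta\}$; $\mathcal{T}_\alpha \mapsto \mathcal{F}_{\alpha+1}$ with $\alpha+1 \leq \zeta+1 \leq \max\{\gamma, \zeta+1\}$; and $0 \mapsto 0$. In each subcase the resulting value lies in $S_1$, completing the induction. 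The only genuinely subtle step in the whole argument is the closure of $S_0$ under infinite $\sup$ and $\inf$; everything else is bookkeeping.
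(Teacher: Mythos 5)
Your proof is correct and follows essentially the same route as the paper: structural induction on $\phi$, with the connective and quantifier cases handled by showing the sets $[\mathcal{F}_0,\mathcal{F}_\gamma]\cup\{0\}\cup[\mathcal{T}_\zeta,\mathcal{T}_0]$ (and the enlarged variant) are closed under $\min$, $\max$, $\sup$, $\inf$, and the negation case by direct case analysis on Definition~\ref{wfunktion}. The paper phrases the $\sup$/$\inf$ step as ruling out $\mathcal{F}_\alpha$ ($\alpha>\gamma$) and $\mathcal{T}_\beta$ ($\beta>\zeta$) as least upper bounds rather than as an explicit closure observation, but the content is the same.
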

\begin{proof} We prove this by induction on the structure of $\phi$.
\\\textit{Case 1:} $\phi$ is an atom. Obviously, if $\phi=\bot$ or $\phi=\top$, then the lemma holds. Otherwise, there is a ground instance $A\in H_B$ of $\phi$ such that $\llbracket \phi\rrbracket^I_h=I(A)$ and the lemma also holds in this case.\\\textit{Case 2:} $\phi=\psi_1\vee \psi_2$ and the lemma holds for $\psi_1$ and $\psi_2.$ There is an $i\in\{1,2\}$ such that $\llbracket \psi_1\rrbracket^I_h\leq\llbracket \psi_i\rrbracket^I_h$ and $\llbracket \psi_2\rrbracket^I_h\leq\llbracket \psi_i\rrbracket^I_h$. Then, using $\text{deg}_\neg(\psi_i)\leq \text{deg}_\neg(\phi)$ and $\llbracket \phi\rrbracket^I_h=\llbracket \psi_i\rrbracket^I_h$, we get that the lemma also holds for $\phi$.
\\\textit{Case 3:} $\phi=\exists x(\psi)$ and the lemma holds for $\psi$. Then, using Definition \ref{wfunktion}, we get that $\llbracket \phi\rrbracket^I_h=\text{sup}\{\llbracket \psi\rrbracket^I_{h[x\mapsto u]};\;u\in H_U\}$. Let us assume that $\text{deg}_\neg(\phi)=0$. Then, using the assumption of this case and $\text{deg}_\neg(\psi)\leq\text{deg}_\neg(\phi)$, we get that \begin{equation}\text{$\llbracket \psi\rrbracket^I_{h[x\mapsto u]}\in[\mathcal{F}_0,\mathcal{F}_\gamma]\cup\{0\}\cup[\mathcal{T}_\zeta,\mathcal{T}_0]$ for all $u\in H_U$.}\label{lemmaminimaleq 1}\end{equation} This implies that the values $\mathcal{F}_\alpha$ and $\mathcal{T}_\beta$ cannot be least upper bounds (for all $\alpha>\gamma$ and for all $\beta>\zeta$). For instance, assume that $\beta>\zeta$ and $\mathcal{T}_\beta$ is a least upper bound. Then, using statement (\ref{lemmaminimaleq 1}), we get that $0$ must be an upper bound, and hence this contradicts the assumption that $\mathcal{T}_\beta$ is the least upper bound. This implies $\llbracket \phi\rrbracket^I_h=\text{sup}\{\llbracket \psi\rrbracket^I_{h[x\mapsto u]};\;u\in H_U\}\in[\mathcal{F}_0,\mathcal{F}_\gamma]\cup\{0\}\cup[\mathcal{T}_\zeta,\mathcal{T}_0]$ and the lemma holds for $\phi$. Now let us assume that $\text{deg}_\neg(\phi)=1$. This implies that $\llbracket \psi\rrbracket^I_{h[x\mapsto u]}\in[\mathcal{F}_0,\mathcal{F}_{\textup{max}\{\gamma,\zeta+1\}}]\cup\{0\}\cup[\mathcal{T}_{\textup{max}\{\gamma+1,\zeta\}},\mathcal{T}_0]$ for all $u\in H_U$. Then, using the same argumentation as above, we get that  $\llbracket \phi\rrbracket^I_h=\text{sup}\{\llbracket \psi\rrbracket^I_{h[x\mapsto u]};\;u\in H_U\}\in[\mathcal{F}_0,\mathcal{F}_{\textup{max}\{\gamma,\zeta+1\}}]\cup\{0\}\cup[\mathcal{T}_{\textup{max}\{\gamma+1,\zeta\}},\mathcal{T}_0]$ and hence the lemma holds for $\phi$.
\\\textit{Case 4:} $\phi=\neg(\psi)$ and the lemma holds for $\psi$. This implies that $\text{deg}_\neg(\psi)=0$, and hence $\llbracket \psi\rrbracket^I_{h}\in[\mathcal{F}_0,\mathcal{F}_\gamma]\cup\{0\}\cup[\mathcal{T}_\zeta,\mathcal{T}_0]$. If $\llbracket \psi\rrbracket^I_{h}\in[\mathcal{F}_0,\mathcal{F}_\gamma]$, then $\llbracket \neg \psi\rrbracket^I_{h}\in[\mathcal{T}_{\gamma+1},\mathcal{T}_0]$.  If $\llbracket \psi\rrbracket^I_{h}\in\{0\}$, then $\llbracket \neg \psi\rrbracket^I_{h}\in\{0\}$.  If $\llbracket \psi\rrbracket^I_{h}\in[\mathcal{T}_{\zeta},\mathcal{T}_0]$, then $\llbracket \neg \psi\rrbracket^I_{h}\in[\mathcal{F}_{0},\mathcal{F}_{\zeta+1}]$. Hence, the lemma holds also for $\phi$.
\\We omit the case $\phi=\psi_1\wedge \psi_2$ (resp. $\phi=\forall x(\psi)\,)$, since it is similar to Case 2 (resp. Case 3).\qed     
\end{proof}
\begin{theorem}\label{Minimal Model} Let $P$ be a formula-based program such that for every rule $A\leftarrow \phi$ in $P$ the property $\textup{deg}_\neg(\phi)\leq 1$ holds. Then the 3-valued model $M_{P,3}$ of the program $P$ is a minimal 3-valued model.
\end{theorem}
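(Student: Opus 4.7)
The plan is to proceed by contradiction. Suppose $N$ is a 3-valued model of $P$ with $N\leq M_{P,3}$ and $N\neq M_{P,3}$. I would construct an infinite-valued lift $N'$ of $N$, verify that $N'$ is an infinite-valued model of $P$, and derive a contradiction with the minimality of $M_P$ from Theorem \ref{Theorem kleinstes Modell}.

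For the lift, define $N':H_B\to W$ by $N'(A)=\mathcal{F}_1$ if $N(A)=\mathcal{F}$, $N'(A)=0$ if $N(A)=0$, and $N'(A)=\mathcal{T}_0$ if $N(A)=\mathcal{T}$. Then $\textup{collapse}(N')=N$, and the range of $N'$ is confined to $[\mathcal{F}_0,\mathcal{F}_1]\cup\{0\}\cup[\mathcal{T}_0,\mathcal{T}_0]$, exactly the format required by Lemma \ref{LemmaMinimalModel} with $\gamma=1$ and $\zeta=0$. Consequently, for every body $\phi$ of a rule in $P$ and every assignment $h$, the hypothesis $\textup{deg}_\neg(\phi)\leq 1$ and the lemma give the bound $\llbracket\phi\rrbracket^{N'}_h\in[\mathcal{F}_0,\mathcal{F}_1]\cup\{0\}\cup[\mathcal{T}_2,\mathcal{T}_0]$.

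Next I would show $N'$ is a model of $P$. For a ground rule $A\leftarrow\phi$ and assignment $h$, Proposition \ref{collaps} gives $\textup{collapse}(\llbracket\phi\rrbracket^{N'}_h)=\llbracket\phi\rrbracket^N_h$, and $N$ being a 3-valued model yields $\llbracket A\rrbracket^N_h\geq\llbracket\phi\rrbracket^N_h$. Case analysis on $\llbracket\phi\rrbracket^N_h\in\{\mathcal{F},0,\mathcal{T}\}$, combined with the body's restricted range above, yields $\llbracket A\rrbracket^{N'}_h\geq\llbracket\phi\rrbracket^{N'}_h$ in each case; for instance, if $\llbracket\phi\rrbracket^N_h=\mathcal{T}$ then $N(A')=\mathcal{T}$ forces $N'(A')=\mathcal{T}_0$, which dominates the upper bound $\mathcal{T}_0$ on the body. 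Theorem \ref{Theorem kleinstes Modell} then yields $M_P\sqsubseteq_\infty N'$. To extract $N=M_{P,3}$, I would iterate by transfinite induction on $\alpha\in\aleph_1$, refining the lift at stage $\alpha$ to agree with $M_P$ on all atoms $A$ where the induction hypothesis has already established $\textup{collapse}(M_P(A))=N(A)$ for $\deg(M_P(A))<\alpha$, and lifting only the remaining atoms as above. Each refined lift stays a model (Lemma \ref{LemmaMinimalModel} keeps applying under $\textup{deg}_\neg\leq 1$), and a fresh application of minimality extends the coincidence to level $\alpha$. The induction closes at $\delta_P$ by Definition \ref{delta_p}, giving $N=M_{P,3}$ and contradicting the assumption.

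The main obstacle is the transfinite bootstrap: a single appeal to minimality only synchronizes $M_P$ and $N'$ at the level-$0$ true values, so recovering the full equality $N=M_{P,3}$ requires iterating through all countable ordinals up to $\delta_P$ while maintaining modelhood of every refined lift. The hypothesis $\textup{deg}_\neg(\phi)\leq 1$ is precisely what makes Lemma \ref{LemmaMinimalModel} applicable and ensures each refined lift remains a model; as the remark preceding the theorem demonstrates via $P=\{P_1\leftarrow\neg\neg P_1\}$, relaxing the hypothesis breaks both the lemma's bound and the minimality of $M_{P,3}$.
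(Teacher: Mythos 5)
Your overall strategy --- replacing the paper's hands-on analysis of the $T_P$-iteration by repeated black-box appeals to Theorem \ref{Theorem kleinstes Modell} applied to a transfinite sequence of lifts of $N$ --- is genuinely different from the paper's proof and is, I believe, viable; but as written it has two concrete gaps. First, your verification of modelhood only covers the initial lift $N'$: the three-case analysis on $\llbracket\phi\rrbracket^N_h\in\{\mathcal{F},0,\mathcal{T}\}$ works there because every value in the range of $N'$ is at least $\mathcal{F}_1$ and Lemma \ref{LemmaMinimalModel} caps false body values at $\mathcal{F}_1$. For a refined lift $N'_\alpha$ at a stage $\alpha>0$, however, a head $A$ may carry the frozen value $M_P(A)=\mathcal{F}_0$, while the collapse argument only tells you $\llbracket\phi\rrbracket^{N}_h=\mathcal{F}$, which gives no bound at all; here you must invoke Theorem \ref{Fortsetzungssatz1} (since $N'_\alpha$ agrees with $M_P$ on all levels $<\alpha$, body values of degree $<\alpha$ transfer between $M_P$ and $N'_\alpha$) to conclude $\llbracket\phi\rrbracket^{N'_\alpha}_h=\llbracket\phi\rrbracket^{M_P}_h\leq\mathcal{F}_0$. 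Your proposal never mentions Extension Theorem I, and Lemma \ref{LemmaMinimalModel} alone does not close this case.

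Second, and more seriously, the conclusion ``giving $N=M_{P,3}$'' does not follow from the induction you describe. Your bootstrap tracks the coincidence $\textup{collapse}(M_P(A))=N(A)$ only for atoms with $\deg(M_P(A))<\alpha$, so after it closes at $\delta_P$ you have exactly $M_{P,3}\Vert\mathcal{T}\subseteq N\Vert\mathcal{T}$ (the converse inclusions being the standing assumption $N\leq M_{P,3}$). Atoms with $M_P(A)=0$ are never reached at any stage, so the inclusion $N\Vert\mathcal{F}\subseteq M_{P,3}\Vert\mathcal{F}$ --- that no atom undefined in $M_P$ is false in $N$ --- is left unproved, and this is precisely the half of the theorem that fails without $\deg_\neg(\phi)\leq1$: the example $P=\{P_1\leftarrow\neg\neg P_1\}$ violates this inclusion, not the one about true atoms. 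The paper devotes the entire second half of its proof (a separate lift at level $\delta_P+1$ plus a transfinite induction along the $T_{P,\delta_P+1}$-iteration using Lemma \ref{LemmaMinimalModel}) to exactly this point. Within your framework the gap is repairable: at stage $\delta_P$ your lift places the atoms of $N\Vert\mathcal{F}\setminus M_{P,3}\Vert\mathcal{F}$ at $\mathcal{F}_{\delta_P+1}$ while agreeing with $M_P$ on all levels $\leq\delta_P$, so $M_P\sqsubseteq_\infty N'_{\delta_P}$ forces $N'_{\delta_P}\Vert\mathcal{F}_{\delta_P+1}\subseteq M_P\Vert\mathcal{F}_{\delta_P+1}=\emptyset$ --- but this final application of minimality on the false side is the essential step, and it is missing from your write-up.
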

\begin{proof} Let $N_3$ be an arbitrary $3$-valued model of the program $P$, such that $N_3$ is smaller or equal to $M_3$. This is equivalent to \begin{equation}\label{MinimalEquation1}\text{ $M_{P,3}\Vert\mathcal{F}\subseteq N_3\Vert \mathcal{F}$ and $N_3\Vert\mathcal{T}\subseteq M_{P,3}\Vert \mathcal{T}$.}\end{equation} Now we have to prove that $N_3$ is equal to $M_{P,3}$. Note that this holds if and only if both equations $M_{P,3}\Vert\mathcal{F}=N_3\Vert \mathcal{F}$ and $N_3\Vert\mathcal{T}=M_{P,3}\Vert \mathcal{T}$ hold. \\Firstly, we prove that $N_3\Vert\mathcal{T}=M_{P,3}\Vert \mathcal{T}$ by contradiction. We assume that \begin{equation}M_{P,3}\Vert \mathcal{T}\setminus N_{3}\Vert \mathcal{T}\neq\emptyset.\label{MinimalEquation2}\end{equation} We know that $M_{P,3}\Vert\mathcal{T}=\bigcup_{\alpha\in\aleph_1}M_P\Vert\mathcal{T}_\alpha$ and hence, using (\ref{MinimalEquation2}), there must be at least one ordinal $\alpha\in\aleph_1$ such that $M_{P}\Vert \mathcal{T_\alpha}\setminus N_{3}\Vert \mathcal{T}\neq\emptyset$.  This justifies the definition $\alpha_{\text{min}}:=\text{min}\{\alpha\in\aleph_1; \;M_{P}\Vert \mathcal{T_\alpha}\setminus N_{3}\Vert \mathcal{T}\neq\emptyset\}.$
Using Theorem \ref{Eigenschaften Approxi} we get that $M_{\alpha_\text{min}}=T_{P,\alpha_{\text{min}}}^{\aleph_1}(\bigsqcup_{\beta<\alpha_{\text{min}}} M_{\beta})$. To improve readability we define $J:=\bigsqcup_{\beta<\alpha_{\text{min}}} M_{\beta}$. It is obviously that $\alpha_\text{min}<\delta_P$, and hence Definition \ref{M_P}, Theorem \ref{Eigenschaften Approxi}, and Definition \ref{limes aleph_1} imply $M_{P}\Vert \mathcal{T_{\alpha_\text{min}}}=M_{\delta_P}\Vert \mathcal{T_{\alpha_\text{min}}}=M_{\alpha_\text{min}}\Vert \mathcal{T_{\alpha_\text{min}}}=\bigcup_{\gamma\in\aleph_1} T^\gamma_{P,\alpha_\text{min}}(J)\|\mathcal{T}_{\alpha_\text{min}}$. This and the definition of $\alpha_{\text{min}}$ justify the definition $\gamma_{\text{min}}:=\text{min}\{\gamma\in\aleph_1; \;T^\gamma_{P,\alpha_\text{min}}(J)\|\mathcal{T}_{\alpha_\text{min}}\setminus N_{3}\Vert \mathcal{T}\neq\emptyset\}.$ From Definition \ref{Def vereinigung} and Definition \ref{limes aleph_1} we infer that $0<\gamma_\text{min}$ and $\gamma_\text{min}$ is not an infinite limit ordinal, hence $\gamma_\text{min}$ is a successor ordinal. We assume that $\gamma_\text{min}=\gamma_\text{min}^-+1$. Then, using the definition of $\alpha_\text{min}$ and $\gamma_\text{min}$, we get that $T^{\gamma_\text{min}-1}_{P,\alpha_\text{min}}(J)\|\mathcal{T}_{\zeta}\subseteq N_{3}\Vert\mathcal{T}$ for all $\zeta\leq\alpha_\text{min}$. Using statement (\ref{MinimalEquation1}) we infer that $T^{\gamma_\text{min}-1}_{P,\alpha_\text{min}}(J)\|\mathcal{F}_{\zeta}\subseteq N_{3}\Vert\mathcal{F}$ for all $\zeta<\alpha_\text{min}$. Hence, the following definition of the infinite-valued interpretation $N$ is well-defined.$$N(A):=\begin{cases}\mathcal{F}_{\zeta}, & \text{if }\zeta<\alpha_\text{min} \;\&\;A\in T^{\gamma_\text{min}-1}_{P,\alpha_\text{min}}(J)\|\mathcal{F}_{\zeta} \\
\mathcal{F}_{\alpha_\text{min}}, & \text{if }A\in T^{\gamma_\text{min}-1}_{P,\alpha_\text{min}}(J)\|\mathcal{F}_{\alpha_\text{min}}\cap N_3\Vert\mathcal{F} \\
\mathcal{F}_{\alpha_\text{min}+1}, & \text{if }A\in N_3\Vert\mathcal{F}\setminus \bigcup_{\zeta\leq\alpha_\text{min}} T^{\gamma_\text{min}-1}_{P,\alpha_\text{min}}(J)\|\mathcal{F}_{\zeta}\\
\mathcal{T}_{\zeta}, & \text{if }\zeta\leq\alpha_\text{min} \;\&\;A\in T^{\gamma_\text{min}-1}_{P,\alpha_\text{min}}(J)\|\mathcal{T}_{\zeta}  \\
\mathcal{T}_{\alpha_\text{min}+1}, & \text{if }A\in N_3\Vert\mathcal{T}\setminus \bigcup_{\zeta\leq\alpha_\text{min}} T^{\gamma_\text{min}-1}_{P,\alpha_\text{min}}(J)\|\mathcal{T}_{\zeta} \\
0, & \text{otherwise} \\
\end{cases} \;\;\;\;\;\;(\text{for all }A\in H_B)$$               
It is easy to see that \begin{equation}T^{\gamma_\text{min}-1}_{P,\alpha_\text{min}}(J)\sqsubseteq_{\alpha_\text{min}}N\text{ and that }N_3=\text{collapse}(N).\label{MinimalEquation3}\end{equation} Since $T^{\gamma_\text{min}}_{P,\alpha_\text{min}}(J)\|\mathcal{T}_{\alpha_\text{min}}\setminus N_{3}\Vert \mathcal{T}$ is not empty, we can pick an $A$ that is contained in this set. Then, together with Definition \ref{def immediate consequence}, we get that $\mathcal{T}_{\alpha_\text{min}}=T^{\gamma_\text{min}}_{P,\alpha_\text{min}}(J)(A)=T_P(T^{\gamma_\text{min}-1}_{P,\alpha_\text{min}}(J))(A)=\text{sup}\{\llbracket \phi\rrbracket_{I};\;A\leftarrow \phi\in P_G\}$, where $I:=T^{\gamma_\text{min}-1}_{P,\alpha_\text{min}}(J)$. Hence, using Lemma \ref{sup inf existenz annahme}, we can pick a rule $A\leftarrow \phi\in P_G$ such that $\llbracket \phi\rrbracket_I=\mathcal{T}_{\alpha_\text{min}}$. Then, using statement (\ref{MinimalEquation3}), Theorem \ref{Fortsetzungssatz1}, and Proposition \ref{collaps}, we get that $\llbracket \phi\rrbracket_N=\mathcal{T}_{\alpha_\text{min}}$ and $\llbracket \phi\rrbracket_{N_3}=\llbracket \phi\rrbracket_{\text{collapse}(N)}=\text{collapse}(\llbracket \phi\rrbracket_N)=\mathcal{T}$. Lastly, the fact that $N_3$ is a model and $A\leftarrow \phi$ is a ground instance of $P$  imply that $N_3(A)=\mathcal{T}$. But this is a contradiction because we have chosen $A$ to be not contained in $N_{3}\Vert \mathcal{T}$. Hence, statement (\ref{MinimalEquation2}) must be wrong (i.e., $M_{P,3}\Vert\mathcal{T}=N_3\Vert \mathcal{T}$).
\\Secondly, we show that $M_{P,3}\Vert\mathcal{F}=N_3\Vert \mathcal{F}$. Definition \ref{M_P} implies that $M_{P,3}\Vert\mathcal{F}=\bigcup_{\zeta<\delta_P}M_{\delta_P}\Vert\mathcal{F}_\zeta$ and $M_{P,3}\Vert\mathcal{T}=\bigcup_{\zeta<\delta_P}M_{\delta_P}\Vert\mathcal{T}_\zeta$. Then, using \eqref{MinimalEquation1} and the result of the first part of this proof, we get that $\bigcup_{\zeta<\delta_P}M_{\delta_P}\Vert\mathcal{F}_\zeta\subseteq N_3\Vert\mathcal{F}$ and $\bigcup_{\zeta<\delta_P}M_{\delta_P}\Vert\mathcal{T}_\zeta=N_3\Vert\mathcal{T}$. Hence, the following definition of the infinite-valued interpretation $N$ is well-defined and $N_3=\text{collapse}(N)$.
$$N(A):=\begin{cases}\mathcal{F}_{\zeta}, & \text{if }\zeta<\delta_P \;\&\;A\in M_{\delta_P}\|\mathcal{F}_{\zeta}\\
\mathcal{F}_{\delta_P+1}, & \text{if }A\in N_3\Vert\mathcal{F}\setminus M_{P,3}\Vert\mathcal{F}\\
\mathcal{T}_{\zeta}, & \text{if }\zeta<\delta_P \;\&\;A\in M_{\delta_P}\|\mathcal{T}_{\zeta}\\
0, & \text{otherwise}\\
\end{cases}\;\;\;\;\;\;(\text{for all }A\in H_B)$$   
Now we are going to prove by transfinite induction on $\zeta\in\aleph_1$ that $T^{\zeta}_{P,\delta_P+1}(M_{\delta_{P}})\sqsubseteq_{\delta_P+1} N$. Obviously, $T^{\zeta}_{P,\delta_P+1}(M_{\delta_{P}})=_{\delta_P} N$ for all $\zeta\in\aleph_1$.
The Definition of $N$, Definition \ref{delta_p}, and Theorem \ref{Eigenschaften Approxi} imply $N\Vert\mathcal{T}_{\delta_P+1}=\emptyset=M_{\delta_P+1}\Vert\mathcal{T}_{\delta_p+1}=T^{\aleph_1}_{P,\delta_P+1}(M_{\delta_{P}})\Vert\mathcal{T}_{\delta_P+1}= \bigcup_{\gamma<\aleph_1} T^{\gamma}_{P,\delta_P+1}(M_{\delta_P})\|\mathcal{T}_{\delta_P+1}$. Hence, $T^{\zeta}_{P,\delta_P+1}(M_{\delta_P})\|\mathcal{T}_{\delta_P+1}\subseteq N\Vert\mathcal{T}_{\delta_P+1}$ for all $\zeta\in\aleph_1$. It remains to show that $N\Vert\mathcal{F}_{\delta_P+1}\subseteq T^{\zeta}_{P,\delta_P+1}(M_{\delta_P})\|\mathcal{F}_{\delta_P+1}$ for all $\zeta\in\aleph_1$. 
\\\textit{Case 1:} $\zeta=0$. It is easy to prove (using Theorem 4, the result of the first part of this proof, and $N_3\Vert\mathcal{F}\cap N_3\Vert\mathcal{T}=\emptyset$)  that $M_{\delta_P}\Vert \mathcal{F}_{\delta_P+1}=H_B\setminus(M_{P,3}\Vert\mathcal{F}\cup M_{P,3}\Vert\mathcal{T})\supseteq N_3\Vert\mathcal{F}\setminus M_{P,3}\Vert\mathcal{F}=N\Vert\mathcal{F}_{\delta_P+1}$.
\\\textit{Case 2:} $\zeta$ is a successor ordinal and $T^{\zeta-1}_{P,\delta_P+1}(M_{\delta_P})\sqsubseteq_{\delta_P+1} N$. Then, using Definition \ref{limes aleph_1} and Lemma \ref{Ketteneigenschaften}, we get that \begin{equation}T_P(T^{\zeta-1}_{P,\delta_P+1}(M_{\delta_P}))=T^{\zeta}_{P,\delta_P+1}(M_{\delta_P}) \label{minimaleq4}\end{equation} and \begin{equation} T^{\zeta}_{P,\delta_P+1}(M_{\delta_P})\Vert\mathcal{F}_{\delta_p+1}\subseteq T^{\zeta-1}_{P,\delta_P+1}(M_{\delta_P})\Vert\mathcal{F}_{\delta_P+1}.\label{minimaleq5}\end{equation}
We will prove that $T^{\zeta-1}_{P,\delta_P+1}(M_{\delta_P})\Vert\mathcal{F}_{\delta_P+1}\setminus T^{\zeta}_{P,\delta_P+1}(M_{\delta_P})\Vert\mathcal{F}_{\delta_p+1}$ and $N\Vert\mathcal{F}_{\delta_P+1}$ are disjoint. This, using  $T^{\zeta-1}_{P,\delta_P+1}(M_{\delta_P})\sqsubseteq_{\delta_P+1} N$ and statement (\ref{minimaleq5}), implies that $N\Vert\mathcal{F}_{\delta_P+1}\subseteq T^{\zeta}_{P,\delta_P+1}(M_{\delta_P})\|\mathcal{F}_{\delta_P+1}$ and we have proved this case. Therefore, we choose an arbitrary  $A\in T^{\zeta-1}_{P,\delta_P+1}(M_{\delta_P})\Vert\mathcal{F}_{\delta_P+1}\setminus T^{\zeta}_{P,\delta_P+1}(M_{\delta_P})\Vert\mathcal{F}_{\delta_p+1}$. Hence, using Lemma \ref{Ketteneigenschaften}, we get that $\mathcal{F}_{\delta_P+1}<T^{\zeta}_{P,\delta_P+1}(M_{\delta_P})(A)$. This, together with (\ref{minimaleq4}) and Definition \ref{def immediate consequence}, implies that there must be a rule $A\leftarrow \phi\in P_G$ such that $\mathcal{F}_{\delta_P+1}<\llbracket \phi\rrbracket_{I}$, where $I$ is given by $I:=T^{\zeta-1}_{P,\delta_P+1}(M_{\delta_P})$. Then, using the assumption  $I\sqsubseteq_{\delta_P+1} N$ and Theorem \ref{Fortsetzungssatz1}, we get that $\mathcal{F}_{\delta_P+1}<\llbracket \phi\rrbracket_{N}$. We know that for all atoms $C\in H_B$ the image $N(C)$ is an element of $[F_0,F_{\delta_P+1}]\cup\{0\}\cup[T_{\delta_P},T_0]$. Then Lemma \ref{LemmaMinimalModel} and the fact that $\deg_\neg(\phi)\leq 1$ imply $0\leq\llbracket \phi\rrbracket_{N}$. Hence, using Proposition \ref{collaps}, $N_3=\text{collapse}(N)$ and $N_3$ is a model of $P$, we get that $0\leq \llbracket \phi\rrbracket_{N_3}\leq N_3(A)$. Finally, this implies $A\notin N_3\Vert\mathcal{F}\supseteq N_3\Vert\mathcal{F}\setminus M_{P,3}\Vert\mathcal{F}=N\Vert\mathcal{F}_{\delta_p+1}$.\\\textit{Case 3:}  $\zeta>0$ is a limit ordinal and $T^{\gamma}_{P,\delta_P+1}(M_{\delta_P})\sqsubseteq_{\delta_P+1} N$ for all $\gamma<\zeta$. This implies $N\Vert\mathcal{F}_{\delta_P+1}\subseteq T^{\gamma}_{P,\delta_P+1}(M_{\delta_P})\|\mathcal{F}_{\delta_P+1}$ for all $\gamma<\zeta$. Hence, using Definition \ref{limes aleph_1}, we get that $T^{\zeta}_{P,\delta_P+1}(M_{\delta_P})\|\mathcal{F}_{\delta_P+1}=\bigcap_{\gamma\in\zeta} T^\gamma_{P,\delta_P+1}(M_{\delta_P})\|\mathcal{F}_{\delta_P+1}\supseteq N\Vert\mathcal{F}_{\delta_P+1}$.
\\The above transfinite induction shows that $N\Vert\mathcal{F}_{\delta_P+1}\subseteq\bigcap_{\zeta\in\aleph_1} T^{\zeta}_{P,\delta_P+1}(M_{\delta_P})\|\mathcal{F}_{\delta_P+1}$. Then, using that $M_{\delta_P+1}\Vert\mathcal{F}_{\delta_P+1}=\emptyset$ and $M_{\delta_P+1}\Vert\mathcal{F}_{\delta_P+1}=\bigcap_{\zeta\in\aleph_1} T^{\zeta}_{P,\delta_P+1}(M_{\delta_P})\|\mathcal{F}_{\delta_P+1}$, we get that $\emptyset=N\Vert\mathcal{F}_{\delta_P+1}=N_3\Vert\mathcal{F}\setminus M_{P,3}\Vert\mathcal{F}$ (see definition of $N$ above). Last of all, using the assumption (\ref{MinimalEquation1}), we get that  $M_{P,3}\Vert\mathcal{F}=N_3\Vert \mathcal{F}$.\qed         
\end{proof}    
\section{Summary and Future Work}
We have shown that every formula-based logic program $P$ has a least infinite-valued model $M_P$ with respect to the ordering $\sqsubseteq_\infty$ given  on the set of all infinite-valued interpretations. We have presented how to construct the model $M_P$ with the help of the immediate consequence operator $T_P$ and have shown that $M_P$ is also the least of all fixed points of the operator $T_P$. Moreover, we have considered the 3-valued interpretation $M_{P,3}$ and have proven that it is a 3-valued model of the program $P$. Furthermore, we have observed a  restricted class of formula-based programs such that the associated 3-valued models are even minimal models.\\There are some aspects of this paper that we feel should be further investigated. Firstly, we believe that the main results of this work also hold in Zermelo-Fraenkel axiomatic set theory without the Axiom of Choice (ZF). For instance, we could use the class of all ordinals $\Omega$ instead of the cardinal $\aleph_1$ in Theorem \ref{Fortsetzungssatz2}. Secondly, we have proven that the ordinal $\delta_\text{max}$ is at least $\omega^\omega$, but on the other hand we do not know a program $P$ such that $\omega^\omega<\delta_P$. So, one could assume that $\delta_\textup{max}=\omega^\omega$.  Thirdly, the negation-as-failure rule is sound for $M_P$ (respectively, $M_{P,3}$) when we are dealing with a normal program $P$. Within the context of formula-based programs we think it would be fruitful to investigate the rule of definitional reflection presented in \cite{Schroeder-Heister} instead of negation-as-failure. Lastly, we believe that the presented theory can be useful in the areas of databases and data mining. We are looking forward to collaborate with research groups specializing in these areas. 
\\\\\textbf{Acknowledgements.} This work has been financed through a grant made available by the Carl Zeiss Foundation. The author is grateful to Prof. Dr. Peter Schroeder-Heister, Hans-Joerg Ant, M. Comp. Sc., and three anonymous reviewers for helpful comments and suggestions. The final preparation of the manuscript was supported by DFG grant Schr275/16-1.

\end{document}